\let\save@mathaccent\mathaccent
\newcommand*\if@single[3]{%
  \setbox0\hbox{${\mathaccent"0362{#1}}^H$}%
  \setbox2\hbox{${\mathaccent"0362{\kern0pt#1}}^H$}%
  \ifdim\ht0=\ht2 #3\else #2\fi
  }
\newcommand*\rel@kern[1]{\kern#1\dimexpr\macc@kerna}
\newcommand*\wideaccent[2]{\@ifnextchar^{{\wide@accent{#1}{#2}{0}}}{\wide@accent{#1}{#2}{1}}}
\newcommand*\wide@accent[3]{\if@single{#2}{\wide@accent@{#1}{#2}{#3}{1}}{\wide@accent@{#1}{#2}{#3}{2}}}
\newcommand*\wide@accent@[4]{%
  \begingroup
  \def\mathaccent##1##2{%
    \let\mathaccent\save@mathaccent
    \if#42 \let\macc@nucleus\first@char \fi
    \setbox\z@\hbox{$\macc@style{\macc@nucleus}_{}$}%
    \setbox\tw@\hbox{$\macc@style{\macc@nucleus}{}_{}$}%
    \dimen@\wd\tw@
    \advance\dimen@-\wd\z@
    \divide\dimen@ 3
    \@tempdima\wd\tw@
    \advance\@tempdima-\scriptspace
    \divide\@tempdima 10
    \advance\dimen@-\@tempdima
    \ifdim\dimen@>\z@ \dimen@0pt\fi
    \rel@kern{0.6}\kern-\dimen@
    \if#41
      #1{\rel@kern{-0.6}\kern\dimen@\macc@nucleus\rel@kern{0.4}\kern\dimen@}%
      \advance\dimen@0.4\dimexpr\macc@kerna
      \let\final@kern#3%
      \ifdim\dimen@<\z@ \let\final@kern1\fi
      \if\final@kern1 \kern-\dimen@\fi
    \else
      #1{\rel@kern{-0.6}\kern\dimen@#2}%
    \fi
  }%
  \macc@depth\@ne
  \let\math@bgroup\@empty \let\math@egroup\macc@set@skewchar
  \mathsurround\z@ \frozen@everymath{\mathgroup\macc@group\relax}%
  \macc@set@skewchar\relax
  \let\mathaccentV\macc@nested@a
  \if#41
    \macc@nested@a\relax111{#2}%
  \else
    \def\gobble@till@marker##1\endmarker{}%
    \futurelet\first@char\gobble@till@marker#2\endmarker
    \ifcat\noexpand\first@char A\else
      \def\first@char{}%
    \fi
    \macc@nested@a\relax111{\first@char}%
  \fi
  \endgroup
}
\newcommand{\oa}{{\overline{a}}}
\newcommand{\bff}{{\mathbf{f}}}
\newcommand{\oh}{{\overline{h}}}
\newcommand{\bfi}{{\mathbf{i}}}
\newcommand{\tl}{{\widetilde{l}}}
\newcommand{\tn}{{\widetilde{n}}}
\newcommand{\op}{{\overline{p}}}
\newcommand{\tp}{{\widetilde{p}}}
\newcommand{\hp}{{\hat{p}}}
\newcommand{\oy}{{\overline{y}}}
\newcommand{\mA}{{\mathcal{A}}}
\newcommand{\omA}{{\overline{\mathcal{A}}}}
\newcommand{\mB}{{\mathcal{B}}}
\newcommand{\bfC}{{\mathbf{C}}}
\newcommand{\bD}{{\mathbb{D}}}
\newcommand{\bfE}{{\mathbf{E}}}
\newcommand{\bF}{{\mathbb{F}}}
\newcommand{\oH}{{\overline{H}}}
\newcommand{\bfH}{{\mathbf{H}}}
\newcommand{\tH}{{\widetilde{H}}}
\newcommand{\bfI}{{\mathbf{I}}}
\newcommand{\bN}{{\mathbb{N}}}
\newcommand{\bfN}{{\mathbf{N}}}
\newcommand{\bR}{{\mathbb{R}}}
\newcommand{\bfR}{{\mathbf{R}}}
\newcommand{\mR}{{\mathcal{R}}}
\newcommand{\bfS}{{\mathbf{S}}}
\newcommand{\mS}{{\mathcal{S}}}
\newcommand{\oV}{{\overline{V}}}
\newcommand{\bfV}{{\mathbf{V}}}
\newcommand{\tV}{{\widetilde{V}}}
\newcommand{\tW}{{\widetilde{W}}}
\newcommand{\mX}{{\mathcal{X}}}
\newcommand{\mY}{{\mathcal{Y}}}
\newcommand{\omY}{{\overline{\mathcal{Y}}}}
\newcommand{\tmY}{{\widetilde{\mY}}}
\newcommand{\bZ}{{\mathbb{Z}}}
\newcommand{\mZ}{{\mathcal{Z}}}
\newcommand{\abs}[1]{\left|{#1}\right|}
\newcommand{\set}[1]{{\left\{{#1}\right\}}}
\newcommand{\br}[1]{{\left({#1}\right)}}
\newcommand{\brsq}[1]{{\left[{#1}\right]}}
\newcommand{\brabs}[1]{{\left|{#1}\right|}}
\newcommand{\prob}[1]{\set{#1}}
\newcommand{\Probd}[2]{{\Pr_{#1}\prob{#2}}}
\newcommand{\ceil}[1]{\lceil{{#1}\rceil}}
\newcommand{\floor}[1]{\lfloor{{#1}\rfloor}}
\newcommand{\defeq}{{\overset{\triangle}{=}}}
\newcommand{\ve}{\varepsilon}
\newcommand{\vp}{\varphi}
\newcommand{\tosq}{\rightsquigarrow}
\newcommand{\toshort}{\shortrightarrow}
\newcommand{\bg}{\mathbf{g}}
\newcommand{\bi}{\mathbf{i}}
\newcommand{\obeta}{\overline{\beta}}
\newtheorem{theorem}{Theorem}
\newtheorem{proposition}{Proposition}
\DeclareMathOperator{\const}{const}
\newcommand{\pW}[2]{[{#1}{#2}]}
\newcommand{\pWd}[2]{[{#1}{#2}]}
\newcommand{\pWy}[2]{[{#1}{#2}]}
\newcommand{\pWyd}[2]{[{#1}{#2}]}
\newcommand{\vc}[1]{\mathbf{#1}}
\title{Channels with Markov Synchronization Errors: Information Stability and Capacity Bounds}
\author{Ruslan Morozov and Tolga M. Duman\\Bilkent University, Ankara, Turkey\thanks{This work was funded by the European Union through the ERC Advanced Grant 101054904: TRANCIDS. Views and opinions expressed are, however, those of the authors only and do not necessarily reflect those of the European Union or the European Research Council Executive Agency. Neither the European Union nor the granting authority can be held responsible for them.}
\thanks{This paper was presented in part at the 2024 IEEE International Symposium on Information Theory (ISIT) \cite{Morozov2024ISIT}.}\\Email: ruslan.morozov@bilkent.edu.tr, duman@ee.bilkent.edu.tr} 
\date{}
\begin{document}

\maketitle 
\begin{abstract}
Particularly motivated by DNA storage channels, we consider channels with synchronization errors modeled as insertions and deletions, along with substitutions. We focus on the case where the synchronization error process has memory and investigate the information stability of these channels, hence the existence of their Shannon capacity. We assume that the synchronization errors are governed by a stationary and ergodic finite state Markov chain and prove that such a channel is information-stable, which implies the existence of a coding scheme that achieves the limit of mutual information. This result implies the existence of the Shannon capacity for a wide range of channels with synchronization errors, with different applications, including DNA storage. We also provide specific examples of deletion channels with Markov memory and numerically evaluate their capacity bounds, thereby allowing us to quantify the capacity difference between memoryless deletion channels and those with memory with the same deletion probability and reveal that having memory increases the channel capacity.
\end{abstract}

\section{Introduction}
\label{s:intr}

Channels with synchronization errors are observed in different communication systems. For instance, for a wireless communication system at low signal-to-noise ratios, loss of synchronization results in cycle slips, causing mistimed sampling of the matched filter output, resulting in possible deletions, insertions, and substitutions. Hence, we refer to channels with synchronization errors as insertion/deletion/substitution (IDS) channels. As another example, misalignment of the read-head and write-head may cause deletion or insertion errors for bit-patterned media recording systems \cite{hu07bit}. The insertion/deletion errors are in addition to other possible errors due to noise, modeled as an additive white Gaussian noise process or possibly asymmetric substitution errors. The emerging area of DNA storage represents another scenario in which IDS channels are observed in a practical setting \cite{shomorony21dna}.

There is extensive literature studying different aspects of transmission over channels with insertions/deletions. For instance, there are a number of number theoretic and algebraic code designs (suited for relatively short transmissions) (see, for example, \cite{Sima2021,Song2022,Song2023} for some recent examples) and applications of concatenated coding solutions with inner marker or watermark codes \cite{Ratzer2000,Davey2001,Wang2011}. On a different front, these channels have also been studied from an information-theoretic point of view. 
In particular, Dobrushin \cite{dobrushin67shannons} studied the case, when the synchronization errors are memoryless, i.e., while for transmission of each channel input, the number of output symbols is not the same (hence the receiver cannot immediately identify which output symbols correspond to which input symbol), these errors are independent and identically distributed for all channel uses.
He proved that such synchronization error channel is information stable, and hence the channel capacity exists. Specifically, the capacity expression is given by $C = \mathop{\lim}_{n \rightarrow \infty} \mathop{\max}_{p(\bf{X})} \frac{I(\bf{X};\bf{Y})}{n}$ where $\bf{X}$ is the length-$n$ channel input, and $\bf{Y}$ is the channel output, $I(\cdot;\cdot)$ is the mutual information between the input and the output sequences, and the maximization is over all stationary and ergodic processes.
In \cite{ahlswede71channels}, the coding theorems and the strong converse results are presented for the synchronization error channel and for its generalization which has a feedback, i.e., after each transmitted input symbol the receiver sends the received sequence to the transmitter and transmitter can choose the subsequent symbol to transmit based on this information.

While it is known that the Shannon capacity exists for memoryless synchronization error channels, it has been formidable to evaluate the above expression and compute the capacity, even for some seemingly simple models. 
For instance, the capacity of the independent and identically distributed (i.i.d.) deletion channel is still unknown.
There are, however, upper and lower bounds on the capacity of different synchronization error channels developed over the years.
For instance, Refs. \cite{drinea06lower, Mitzenmacher2006, Driena2007} develop lower bounds on the capacity of i.i.d. deletion channels and channels with i.i.d. deletions and duplications. 
On the other hand, different works including \cite{Diggavi2007,fertonani10novel,rahmati15upper} develop upper bounds on the capacity of i.i.d. deletion channels as well as channels with deletions/insertions/substitutions. 
Asymptotic expansion of the i.i.d. deletion channel capacity is determined in \cite{kanoria13optimal,kalai10tight}. 
Excellent surveys of the information-theoretic developments on channels with synchronization errors can be found in \cite{mitzenmacher09survey,cheraghchi21overview}.

Based on experimental results, it has recently been demonstrated that the insertion/deletion processes in DNA storage channels exhibit memory \cite{hamoum23channel,deng19optimized}. 
In other words, the assumption of i.i.d. insertions and/or deletions falls short of some realistic channel models. 
Motivated by this fact, in this paper, we consider IDS channels with memory; specifically, we study the case where the synchronization errors are governed by an underlying Markov chain, referred to as Markov-IDS channels. 
That is, each state of the Markov chain is an IDS channel, called a component channel, and the overall output of the Markov-IDS channel is obtained by concatenating the individual component channel outputs. 
As the insertion/deletion processes have memory, Dobrushin's result is not directly applicable. 
Hence, we need a new coding theorem to prove that the Shannon capacity of such channels exists, which is the main problem addressed in the paper. 
To prove our results, we extract different steps in the proof of the capacity theorem for synchronization error channels as independent propositions to allow for their use in the future in a streamlined manner. 
Furthermore, we provide several new results, stated as general propositions, on the perturbation of an optimal input distribution for a channel and the order of growth of the variance of mutual information density in connection to the smallest non-zero channel transition probability.

Note that the Markov-IDS channel is a generalization of both IDS and Finite State Markov Channels (FSMC): it is more general than IDS since statistics of the synchronization errors can change over time; it is more general than FSMC since it concatenates the outputs of each component IDS channel (which can be translated to non-concatenated version via inserting a special symbol at the end of each component channel's output). 
We highlight that the most closely related work to our paper is \cite{mao18models}, which proves the capacity theorem for IDS channels with a state, which is defined by a few ($k$) recent input bits.
We call such channels the $k$-IDS model.
In the $k$-IDS model, by using filler bits at the input, the total channel can be translated to multiple independent uses of the same channel.
The Markov-IDS model, however, can only be translated to a product of similar (but not the same) independent channels.
This made the analysis of the variance of mutual information density necessary.
We also highlight \cite{con2025channels}, which generalizes the coding theorem from $k$-IDS to a class of \emph{admissable} channels.
The symbolwise mutual information\footnote{The symbolwise mutual information is defined as mutual information between the input symbol of the channel and the log-likelihood ratio (LLRs) of this symbol, produced by a symbolwise MAP decoder.} is estimated for some cases of $k$-IDS channels in \cite{maarouf23achievable}.

As a more general model, consider the case when the transitions of the Markov chain also depend on a constant number of last input bits as well.
Such a channel model would include both Markov-IDS and ISI-IDS models as special cases. The techniques developed in our paper may (potentially) address this general channel model as well. This is because we already use an infinite number of filler bits to split the Markov-IDS channel into independent copies and ``forget'' the previous Markov state, and we only need a constant amount of filler bits to break the dependency on the input bits. On the other hand, there are details that need to be worked out for a precise proof (which would differ depending on the exact channel model). For example, after just the initial block of the filler input bits, we break the initial state (distribution) of the channel sequence.
One would need to prove (possibly under additional assumptions for different channel models) that one can achieve the capacity even starting from this ``broken'' state distribution. We leave such a study, which would address, e.g., the capacity of nanopore sequencing models, as future work. 

After establishing the information stability of the Markov-IDS channels, we also provide some upper bounds on the channel capacity for a particular Markov deletion channel model. To do this, we employ the basic methodology developed in \cite{fertonani11bounds} by providing the transmitter and/or the receiver with suitable side information. Our numerical evaluations of the bounds indicate that having memory in the deletion process is beneficial in the sense that it increases the channel capacity for a given deletion probability (more precisely, upper bounds on the channel capacity, which are obtained in the same manner for both scenarios). In other words, we can quantify the expected rate increase for synchronization error channels with memory compared to those with i.i.d. synchronization errors.

The paper is organized as follows.
In Section~\ref{s:prelim}, notations and information-theoretic definitions are introduced.
In Section~\ref{s:funcseq}, some helpful propositions about properties of functions of channels are formulated, which are used both in Dobrushin's proofs and in the proofs in Section~\ref{s:proofs}.
In Section~\ref{s:mark}, the Markov-IDS channel sequence is defined, and the capacity theorems are presented.
In Section~\ref{s:proofs}, the proofs of the capacity theorems are given.
In Section~\ref{s:capa}, the numerical upper bounds on the capacity for the Markov-IDS channel sequence are compared to those of the binary deletion channel.
In Section~\ref{s:conc}, conclusions are drawn.
In Appendix~\ref{s:funcapp}, the proofs of propositions from Section~\ref{s:funcseq} are given.
In Appendix~\ref{s:pert}, the variance of mutual information is analyzed via perturbation of the optimal input distribution.

\section{Preliminaries}
\label{s:prelim}

A mathematical model of a physical channel is defined by conditional distributions $W(y|x)$ for all $x\in \mX, y\in\mY$, which means that if a symbol $x\in\mX$ is input, output $y\in\mY$ is observed with probability\footnote{This is valid only for a countable $\mY$; for the uncountable case, there should be output PDFs instead; in this paper we consider finite $\mX$ and countable $\mY$.} $W(y|x)$.
The fact that $W$ is a channel with the input set $\mX$, and the output set $\mY$ will be denoted by $W:\mX\tosq\mY$. Shannon's theorem \cite{shannon48mathematical} considers a memoryless channel $W:\mX\tosq\mY$ that is used $n$ times independently.
The equivalent channel that transmits $n$ symbols simultaneously is the channel $W^n:\mX^n\tosq\mY^n$, where the conditional probabilities are given by $W^n(y_1^n|x_1^n)=\prod_{i=1}^n W(y_i|x_i)$.

In this paper, our focus is on channels with synchronization errors, modeled as insertions, deletions, and channel errors. As an example, consider an insertion channel $V_n$ with $n$ input symbols $x_1^n\in\mX^n$.
In general, for each input \emph{symbol} $x_i\in\mX$, an output \emph{vector} $\vc{y}_i$ over $\mX$ of length $\geq 1$ is produced. In contrast with the channel $W^n$, the resulting output of $V_n$ is a concatenation of output vectors, but not the vector $\vc{y}_1^n$ of output vectors. For example, if the input bits are $(1,0,0)$ and the corresponding outputs for each bit are $(1,0)$, $(0)$ and $(1,0)$, then the overall channel output would be $(1,0,0,1,0)$, not $((1,0), (0), (1,0))$.

In general, a communication setup, besides conditional probabilities of the original channel, also includes some underlying method of combining the outputs of multiple channel uses. 
In the case of simple memoryless channels, the combining method is just stacking the output symbols into an output vector; in the case of a channel with synchronization errors, the combining method is concatenation. 
For this reason, in the non-trivial setup a general case of \emph{a channel sequence} $(W_n:\mX_n\tosq\mY_n)_{n\in\bN}$ is usually considered \cite{hu62shannon, verdu94general}, where sets $\mX_n$, $\mY_n$ can vary for different $n$.
For simplicity of exposition, we consider the case when the input set for $W_n$ is $\mX_n=\mX^n$ and $\mX$ is fixed for all $n$.

For a channel sequence $W=(W_1,W_2,\dots)$, the \emph{mutual information capacity} (or i-capacity) is defined as 
\begin{align}
\bfI(W)=\lim_{n\to\infty}\bfI(W_n)/n,
\end{align} 
if the limit exists, where $\bfI(W_n)$ denotes the maximum mutual information between the input and the output of a channel $W_n$ over all possible input distributions. 
The \emph{coding capacity (c-capacity)} is the (maximum) asymptotic rate of coding schemes that achieve arbitrary small error probabilities as $n\to\infty$. The \emph{capacity theorem} is a statement that the i-capacity is equal to the c-capacity.

There are many capacity theorems that generalize Shannon's theorem on non-trivial channel sequences.
Dobrushin's work \cite{dobrushin59ageneral} considers the most general case, where input and output alphabets can be continuous, and the metric of the decoder output's closeness to the original signal can be arbitrary.
It is shown that the so-called ``\emph{information stability}'' of a channel sequence is sufficient for the capacity theorem to hold.
Later, in \cite{hu62shannon}, it was shown that information stability is also necessary for the capacity theorem.
In \cite{dobrushin67shannons}, Dobrushin proves the capacity theorem for channels with i.i.d. synchronization errors where an output vector of finite expected length is produced for each input symbol, and all the output vectors are concatenated to form the resulting output. 
The generalization of the capacity theorem to the case of IDS channels with continuous input alphabets is studied in~\cite{stambler70memoryless}.
Capacity theorems are also proven for FSMC \cite{gallager68information}, for Gilbert-Elliott channels \cite{mushkin89capacity}, for deletion channels concatenated with FSMC \cite{li21capacity}, for IDS channels with intersymbol interference \cite{mao18models}.
In \cite{mcbain24noisy}, the Markov-constrained coding theorem is proven for the Markov duplication channel.
In \cite{con2025channels}, Section~1.2, there is a short yet comprehensive literature review on the existing coding theorems.

\subsection{Notation}

Sets are written in calligraphic letters $\mX,\mY,\omY$; operators are written in bold letters $\bfC,\bfE,\bfI$.
Vectors and subvectors are denoted by $a_b^c=(a_b,a_{b+1},\dots,a_c)$, where $b,c\in\bN$.
Also, we denote infinite sequences as $a^\infty=(a_1,a_2,\dots)$.
For a set $\mA$, we denote a set of all finite sequences with elements from $\mA$ by
$
\omA=\set{()}\cup \bigcup_{i=1}^\infty \mA^i.
$
We denote the length of a vector $a=a_1^n\in\omA$ by $\bfN(a)=n$. The concatenation (gluing) operator $\bg$ is defined as 
$
\bg(a_1^n,b_1^k)=(a_1,a_2,...,a_n,b_1,...,b_m),
$
which is also generalized to any number of input vectors.
Also, define the function $\bg_n:\omA^n\to\omA$, which glues $n$ vectors $\oa_i\in\omA$ as
$
\bg_n(\oa_1^n)=\bg(\oa_1,\oa_2,...,\oa_n).
$
The inverse image $\bg^{-1}_n(a)$ is the set of all possible ways of cutting $a\in\omA$ into $n$ subvectors, i.e., $\bg^{-1}_n(a)=\set{\oa_1^n|\bg_n(\oa_1^n)=a}$.
For a set $\mS$, we denote by $\bD_\mS$ the set of all distributions over $\mS$.
For $n\in\bN$, denote by $\bD_n$ the set of distributions over the set $\set{1,2,\dots,n}$ in a form of row vectors $\rho_1^n$, $\sum_{i=1}^n\rho_i=1$, $\rho_i\geq 0$.


We also note that in the literature it is often not obvious which probability distribution is assumed for probabilities or expectations. Therefore, to remove any ambiguities, we explicitly specify the probability distribution for probabilities, expectations, and entropy expressions as a subscript of $\Pr$, $\bfE$, $\bfH$.
The values that are not known are interpreted as random variables, and we always take summation/expectation over all their possible values.
For example, if we have a distribution $P(x,y)$, where $x\in\mX,y\in\mY$, and $y_0 \in \mY$ is a fixed value, then
\begin{align}
\Pr_{P(x,y)}\brsq{x>0}&=\sum_{\substack{x\in\mX:x>0\\y\in\mY}}P(x,y), 
\label{eq:prex}
\\
\bfE_{P(x,y_0)}\brsq{f(x)}&=\sum_{x\in\mX}P(x,y_0)\cdot f(x).
\label{eq:eex}
\end{align}
We write $\bfH_A\brsq{B}=\bfE_A\brsq{-\log_2B}$ for entropy-like expressions.
If $A=B$, then we omit the subscript for simplicity:
\begin{align}
\bfH\brsq{W(x,y)}
&=-\sum_{x\in\mX,y\in\mY}W(x,y)\cdot \log_2 W(x,y).
\label{eq:hex2}
\end{align}

We use $O$- and $o$- notation for sequences / functions:
\begin{align}
a_n\in o(b_n) &\iff \lim_{n\to\infty}\frac{a_n}{b_n}=0,
\\
a_n\in O(b_n) &\iff \lim_{n\to\infty}\frac{a_n}{b_n}\in\bR.
\end{align}
Note that $o(f(n)) \subseteq O(f(n))$.

\subsection{Channels and Channel Sequences \label{s:chseq}}

Recall (see the beginning of Section~\ref{s:prelim}) that a channel $W:\mX\tosq\mY$ is a collection of conditional distributions $W(y|x)$, defined for all $x\in\mX$, $y\in\mY$.
Throughout the paper, we assume \emph{finite $\mX$ and countable $\mY$}.

If we further define some input distribution $p\in\bD_\mX$, then we can compute the joint distribution of input and output and the distribution of output:
\begin{subequations}
\begin{align}
\pWd{p}{W}(x,y)&=p(x)W(y|x)
\label{eq:pWxy}
\\
\pWyd{p}{W}(y)&=\sum_{x\in\mX}\pWd{p}{W}(x,y).
\label{eq:pWy}
\end{align}
\end{subequations}

For channels $W_i:\mX_i\tosq\mY_i$, define the channel product $W=\prod_{i=1}^n W_i$ as a channel $W:\mX_1\times\dots\times\mX_n\tosq\mY_1\times\dots\times\mY_n$, such that
\begin{align}
W=\prod_{i=1}^n W_i \iff W(y_1^n|x_1^n)=\prod_{i=1}^n W_i(y_i|x_i) 
\label{eq:prodW}
\end{align}
A similar notation is used for the $n$-th power $W^n:\mX^n\tosq\mY^n$ of a channel $W:\mX\tosq\mY$.

A \emph{channel sequence} $V$ is defined as an infinite sequence $V^\infty$ of channels $V_i:\mX^i\tosq\mY_i$ for some fixed set $\mX$.
Note that channel sequence is not just any infinite sequence of channels: the $i$-th channel should transmit exactly $i$ symbols from $\mX$.

\subsection{Mutual Information Capacity (i-capacity)}

The mutual information density of a channel $W$ for an input distribution $p\in\bD_\mX$ and given values of an input $x\in\mX$ and an output $y\in\mY$ is defined as:
\begin{align}
\bi(W,p,x,y)\stackrel{\triangle}{=}
\begin{cases}	
	0, \text{if } W(y|x)=0 \text{ or } \pWyd{p}{W}(y)=0 \\	\log_2\displaystyle\frac{W(y|x)}{\pWyd{p}{W}(y)}, \text{ otherwise}
\end{cases}
\label{eq:IWpxy}
\end{align}
The mutual information of $W$ under input distribution $p$ is the expectation of mutual information density:
\begin{align}
\bfI(W,p)\;\defeq \; &
\bfE_{\pW{p}{W}(x,y)}[\bi(W,p,x,y)]
\nonumber\\
\;=\; &\sum_{x\in\mX,y\in\mY}\pWd{p}{W}(x,y)\cdot \bi(W,p,x,y).
\label{eq:IWp}
\end{align}
The \emph{mutual information capacity (i-capacity)} \cite{dobrushin59ageneral} of a \emph{channel} $W$ is the maximum possible mutual information of $W$ under any input distribution, and is given by%
\begin{align}
\bfI(W)\ \defeq\ \max_{p\in\bD_\mX}\bfI(W,p).
\label{eq:IWn}
\end{align}
We will call the distribution $p$, for which the maximum is achieved, the \emph{optimal distribution} (for $W$).
By the \emph{i-capacity} of \emph{channel sequence} $V=V^\infty$ we mean the asymptotic average i-capacity of $V_n$ per input symbol:
\begin{align}
\bfI(V)\ \defeq\ \lim_{n\to\infty}\frac{\bfI(V_n)}{n},
\label{eq:IW}
\end{align}
if the limit exists. Note that the i-capacity of a \emph{channel} always exists, since $\bfI(W,p)$ is a concave function of the input distribution $p$ with a countable number of coefficients; however, the i-capacity of a \emph{channel sequence} might not exist, since sequence $\bfI(V_n)/n$ might not converge%
\footnote{Consider a channel which is ideal for even $n$ and complete noise for odd $n$.
In this case, the sequence of values of $\bfI(V_n)/n$ is $\log_2|\mX|$, $0$, $\log_2|\mX|$, $0,\dots$
Such sequence has no limit for $|\mX|>1$.}.

For a channel $W$ and its input distribution $p$, define the \emph{mutual information variance} as the variance of mutual information density:
\begin{align}
\bfV(W,p)=\underset{\pW{p}{W}(x,y)}{\bfE}\brsq{\br{\bfi(W,p,x,y)-\bfI(W,p)}^2}.
\label{eq:vardef}
\end{align}

\subsection{Coding Capacity (C-capacity)}

For a channel $W:\mX\tosq \mY$, code size $M\in\bN$, $M\leq |\mX|$ and error probability $\ve\in[0,1]$, we define a \emph{$(W, M, \ve)$-coding scheme} as a collection $(x_1^M, \mR_1^M)$, where:
\begin{itemize}	
	\item all $x_i$ are distinct codewords from $\mX$, $i=1\dots M$;
	\item $\mR_1^M$ is a collection of $M$ disjoint sets $\mR_i\subseteq \mY$, which are the decoder's decision regions. When $y\in\mR_i$ is received, the decoder outputs $x_i$;
	\item the decoding error probability is not greater than $\ve$ for any $x_i$, i.e.
\begin{align}
	\forall i\in\set{1,\dots,M}: \sum_{y\notin\mR_i}{W(y|x_i)}\leq \ve.
\label{eq:decerr}
\end{align}
\end{itemize}

Consider a channel sequence $V^\infty$.
Define the set $\bfR(V)$ of all achievable rates over $V$ as the set of $R\in\bR_{\geq 0}$, such that
\begin{align}
\forall \ve>0, \exists n_0:\forall n\geq n_0: \exists (V_n, \ceil{2^{nR}}, \ve)\textit{-coding scheme}.
\label{eq:ar}
\end{align}
By \textit{coding capacity (c-capacity)} $\bfC(W)$ of the \emph{channel sequence} $V$ we mean the supremum of achievable rates \cite{dobrushin67shannons}:
\begin{align}
\bfC(V)&\stackrel{\triangle}{=}\sup\bfR(V),
\label{eq:cdef}
\end{align}
The existence of a $(W,M,\ve)$-coding scheme obviously implies the existence of a $(W,M',\ve')$-coding scheme for any $M'\leq M$ and $\ve'\geq\ve$.
So, if $R\in\bfR(V)$, then $R'\in\bfR(V)$ for any $R'<R$.
Thus, the set $\bfR(V)$ is a connected set, i.e., $\bfR(V)\in\set{[0,\alpha], [0,\alpha)}$ for some $\alpha$,
and therefore any channel sequence has its c-capacity.

In~\cite{verdu94general}, the c-capacity $\bfC(V)$ is derived in terms of mutual information density: $\bfC(V)$ is the supremum of all $\alpha$, for which
\begin{align}
\lim_{n\to\infty}~\min_{p_n\in\bD_{\mX^n}}\Pr_{\pW{p_n}{V_n}(x,y)}\set{\frac{\bi(V_n,p_n,x,y)}{n}\leq \alpha}=0.
\label{eq:cformula}
\end{align}

\subsection{Information Stability}

A channel sequence $V=V^\infty$ is called \emph{$J$-information stable} \cite{hu62shannon}, where $J=J^\infty$ with $J_n\to\infty$, if there exist distributions $p^*_n\in\bD_{\mX^n}$ and sequence $\delta^\infty$ with $\delta_n>0$, $\delta_n\to 0$, such that
\begin{align}
\Pr_{\pW{p^*_n}{V_n} V_n(x,y)}\brsq{\abs{\frac{\bi(V_n,p^*_n,x,y)}{J_n}-1}>\delta_n} <\delta_n.
\label{eq:iso}
\end{align}
If $J_n=\bfI(V_n)$ then $V$ is called information-stable.
If, furthermore, the limit $\bfI(V)$ of $\bfI(V_n)/n$ exists, then we call $V$ a strictly information-stable channel sequence and write $\bfS(V)$ to denote this fact.
In the case of $\bfS(V)$, under our general assumption that $\mX$ is finite, \eqref{eq:iso} is equivalent to the following: for all $\delta>0$,
\begin{align}
\lim_{n\to\infty}~\Pr_{\pW{p^*_n}{V_n}(x,y)} \brsq{\abs{\frac{\bi(V_n,p^*_n,x,y)}{n}-\bfI(V)}>\delta} = 0.
\label{eq:is}
\end{align}

\begin{proposition}[Theorems~3.1 and 3.2 in \cite{hu62shannon}]
\label{p:is}
$\bfS(V)$ iff $\bfC(V)=\bfI(V)$.
\end{proposition}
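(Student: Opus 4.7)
The plan is to work through the information-spectrum characterization \eqref{eq:cformula} of $\bfC(V)$, which converts both implications into statements about the concentration of $\bi(V_n,p_n,x,y)/n$.

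For the direction $\bfIS(V)\Rightarrow \bfC(V)=\bfI(V)$: the upper bound $\bfC(V)\le \bfI(V)$ follows from a standard Fano argument that needs no stability hypothesis. For any $(V_n,\lceil 2^{nR}\rceil,\ve_n)$-coding scheme with $\ve_n\to 0$, applying Fano's inequality to the uniform distribution on the codebook yields $nR\le (\bfI(V_n)+1)/(1-\ve_n)$, so $R\le\liminf_n \bfI(V_n)/n=\bfI(V)$. The matching lower bound $\bfC(V)\ge\bfI(V)$ uses stability: plugging the sequence $p^*_n$ from \eqref{eq:is} into \eqref{eq:cformula} at level $\alpha=\bfI(V)-\delta$, the inner minimum is bounded by $\Pr_{p^*_n\circ V_n}[\bi/n\le \bfI(V)-\delta]$, which vanishes by \eqref{eq:is}. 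Hence every $\alpha<\bfI(V)$ is achievable, giving $\bfC(V)\ge \bfI(V)$.

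For the direction $\bfC(V)=\bfI(V)\Rightarrow \bfIS(V)$: one must construct a single sequence $p^*_n$ under which $\bi(V_n,p^*_n,x,y)/n$ concentrates at $\bfI(V)$. The lower-tail bound $\Pr[\bi/n\le \bfI(V)-\delta]\to 0$ follows from \eqref{eq:cformula}: for each $\delta_k\downarrow 0$, extract a candidate sequence $p^{(k)}_n$ achieving level $\alpha_k=\bfI(V)-\delta_k$, then diagonalize by setting $p^*_n=p^{(k(n))}_n$ with $k(n)\to\infty$ slowly enough that a single sequence handles every resolution. The upper-tail bound $\Pr[\bi/n\ge \bfI(V)+\delta]\to 0$ is then obtained by combining two ingredients: the identity $\bfE_{p^*_n\circ V_n}[\bi/n]=\bfI(V_n,p^*_n)/n\le \bfI(V_n)/n\to\bfI(V)$, together with the lower-tail bound just established. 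To make this rigorous, I would replace $p^*_n$ by a small-weight mixture with the uniform distribution on $\mX^n$, which enforces the almost-sure upper bound $\bi\le n\log_2|\mX|+o(n)$ while perturbing the lower tail negligibly; a Markov-type estimate on the nonnegative slack $n\log_2|\mX|-\bi$, whose mean is controlled by the expectation bound above, then closes the argument.

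The main obstacle is precisely this upper-tail control in the converse direction: $\bi$ is generically unbounded above for arbitrary input distributions, so one cannot directly apply Markov's inequality to $\bi/n-\bfI(V)$. The mixing-with-uniform trick is the standard workaround, but its analysis demands a careful balance between the vanishing mixing weight $\gamma_n\to 0$ and the (unknown) rate at which the lower-tail probabilities converge, so that both tails close simultaneously; this is where one must be most careful not to appeal to quantitative rates that are not guaranteed by the hypotheses.
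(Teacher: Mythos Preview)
The paper does not prove this proposition at all; it is quoted verbatim as Theorems~3.1--3.2 of Hu~\cite{hu62shannon}. So there is no ``paper's own proof'' to compare against, and your outline must stand on its own.

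Your forward direction $\bfIS(V)\Rightarrow\bfC(V)=\bfI(V)$ is fine: Fano gives the converse $\bfC(V)\le\bfI(V)$ once the limit $\bfI(V)$ exists, and plugging the stable sequence $p^*_n$ into \eqref{eq:cformula} gives achievability.

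The gap is in your upper-tail mechanism for the converse $\bfC(V)=\bfI(V)\Rightarrow\bfIS(V)$. Mixing with the uniform distribution enforces an almost-sure \emph{upper} bound $\bi\le n\log_2|\mX|+\log_2(1/\gamma_n)$, and then you propose ``a Markov-type estimate on the nonnegative slack $n\log_2|\mX|-\bi$''. But Markov applied to this slack bounds $\Pr[\bi\le n\log_2|\mX|-t]$, i.e.\ it bounds the \emph{lower} tail of $\bi$, not the upper tail you need. The genuine obstacle to deducing $\Pr[\bi/n\ge\bfI(V)+\delta]\to 0$ from the mean bound $\bfE[\bi]\le\bfI(V_n)$ and the lower-tail control is the possibility that a heavy \emph{negative} tail of $\bi$ compensates for a heavy positive tail; an a.s.\ upper bound on $\bi$ does nothing about this. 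Moreover, as you yourself flag, the mixing may perturb the lower-tail control you just built, and you offer no argument that the perturbation is harmless.

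The clean fix bypasses mixing entirely. For \emph{any} input distribution $p$ one has the identity $\bfE_{p\circ V_n}\bigl[2^{-\bi(V_n,p,x,y)}\bigr]\le 1$, whence $\Pr[\bi\le -s]\le 2^{-s}$ and $\bfE[\bi^{-}]\le 1/\ln 2$, an absolute constant. Now take the diagonalised sequence $p^*_n$ giving the lower-tail control, set $A_n=\{\bi\ge n(\bfI(V)+\delta)\}$ and $B_n=\{\bi\le n(\bfI(V)-\delta')\}$, and write
\[
\bfI(V_n,p^*_n)=\bfE[\bi]\ \ge\ n(\bfI(V)+\delta)\Pr[A_n]+n(\bfI(V)-\delta')\bigl(1-\Pr[A_n]-\Pr[B_n]\bigr)-\tfrac{1}{\ln 2}.
\]
Using $\bfI(V_n,p^*_n)\le\bfI(V_n)=n\bfI(V)+o(n)$ and $\Pr[B_n]\to 0$ gives $\limsup_n\Pr[A_n]\le\delta'/(\delta+\delta')$; letting $\delta'\downarrow 0$ finishes the upper tail. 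No mixing, no balancing of rates.
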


For further derivations, we need the following sufficient condition of information stability.
\begin{proposition}[A sufficient condition for strict information stability]
\label{p:ise}
If there exists $\bfI(V)$ and a sequence of distributions $p^*$, for which
\begin{align}
\lim_{n\to\infty}~\bfE_{\pW{p^*_n}{V_n}(x,y)}\brsq{\abs{\frac{\bi(V_n,p^*_n,x,y)}{n}-\bfI(V)}} = 0,
\label{eq:ise}
\end{align}
then $\bfS(V)$.
\end{proposition}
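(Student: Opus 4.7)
The plan is to reduce the proposition to a one-line application of Markov's inequality: the hypothesis \eqref{eq:ise} states $L^1$ convergence of the normalized information density to the i-capacity $\bfI(V)$, and passing from $L^1$ to convergence in probability is immediate.

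Concretely, I would fix an arbitrary $\delta>0$ and apply Markov's inequality to the nonnegative random variable $\abs{\bi(V_n,p^*_n,x,y)/n-\bfI(V)}$ under the joint distribution $p^*_n\circ V_n$:
\begin{align*}
\Pr_{p^*_n\circ V_n(x,y)}\brsq{\abs{\frac{\bi(V_n,p^*_n,x,y)}{n}-\bfI(V)}>\delta}
\leq \frac{1}{\delta}\,\bfE_{p^*_n\circ V_n(x,y)}\brsq{\abs{\frac{\bi(V_n,p^*_n,x,y)}{n}-\bfI(V)}},
\end{align*}
and the hypothesis \eqref{eq:ise} drives the right-hand side to $0$ as $n\to\infty$. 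Since $\delta$ was arbitrary, this is exactly the convergence-in-probability statement \eqref{eq:is}, which together with the assumed existence of $\bfI(V)$ is precisely $\bfIS(V)$ in the reformulated sense the paper has introduced just above.

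If one prefers to verify the original definition \eqref{eq:iso} with $J_n=\bfI(V_n)$ rather than its rewriting \eqref{eq:is}, I would set $\alpha_n=\bfI(V_n)/n\to\bfI(V)$ and use the identity
\begin{align*}
\frac{\bi(V_n,p^*_n,x,y)}{\bfI(V_n)}-1=\frac{1}{\alpha_n}\br{\frac{\bi(V_n,p^*_n,x,y)}{n}-\alpha_n}.
\end{align*}
When $\bfI(V)>0$ the prefactor $1/\alpha_n$ is eventually bounded, and the bracket converges to $0$ in probability by the preceding step combined with the deterministic shift $\alpha_n-\bfI(V)\to 0$. A standard diagonal extraction then produces $\delta_n\to 0$ satisfying \eqref{eq:iso}: for each integer $k$ pick $n_k$ so that $\Pr\brsq{\abs{\bi(V_n,p^*_n,x,y)/\bfI(V_n)-1}>1/k}<1/k$ for all $n\geq n_k$, and set $\delta_n=1/k$ on $[n_k,n_{k+1})$.

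The entire substance of the argument is the Markov bound; no concentration phenomenon, uniform integrability, or channel-specific structure needs to be invoked. The only mildly delicate bookkeeping is the renormalization from $1/n$ to $1/\bfI(V_n)$ in the optional second step, which is harmless whenever $\bfI(V)>0$, the regime in which the paper's subsequent capacity theorem is non-trivial.
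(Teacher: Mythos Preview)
Your proposal is correct and matches the paper's own proof essentially line for line: the paper also fixes $\delta>0$, applies Markov's inequality to $\abs{\bi(V_n,p^*_n,x,y)/n-\bfI(V)}$, and concludes \eqref{eq:is} directly. Your optional second paragraph about renormalizing to $J_n=\bfI(V_n)$ goes beyond what the paper bothers with, since it has already declared \eqref{eq:is} to be the operative form of $\bfIS(V)$ under the standing finite-$\mX$ assumption.
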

\begin{proof}
Note that for any $\delta>0$,
\begin{align}
&0  \leq\lim_{n\to\infty}~\Pr\brsq{\abs{\frac{\bi(V_n,p^*_n,x,y)}{n}-\bfI(V)}>\delta}
\nonumber\\
&\leq\frac{1}{\delta}\cdot \lim_{n\to\infty}~\bfE\brsq{\abs{\frac{\bi(V_n,p^*_n,x,y)}{n}-\bfI(V)}}=0,
\label{eq:p2proof}
\end{align}
which is exactly the information stability condition \eqref{eq:is}.
\end{proof}


\section{Functions of Channel Sequences and Their Properties}
\label{s:funcseq}

In this Section, we provide several propositions employed in the proofs of the main theorems. 
The propositions are dedicated to sequences of functions, which are applied to the outputs of each channel in a channel sequence.
Namely, they describe two properties, which are sufficient for function sequences not to change the channel sequence properties, such as i-capacity, c-capacity and $\bfS$ property.
Note that here, in contrast to the next Section, the channel sequence is arbitrary.

The first property (Proposition~\ref{p:invimage}) is that the asymptotic size of the inverse image of functions is small.
This means that the functions do not have the same value for sufficiently large number of arguments, or in other words, do not ``stick together'' many channel's outputs.
The second property (Proposition~\ref{p:pr0}) is that the functions are bijective with high probability, meaning that the channel outputs, for which we cannot inverse the function, are  improbable in the asymptotic regime.

While some of the results are new, others are restatements of Dobrushin's results to make them modular and more accessible.
Also, we would like to highlight that we tried to make these propositions reusable by others when proving capacity theorems.
The proofs are available in the Appendix~\ref{s:funcapp}.

For a channel $W:\mX\tosq\mY$ and function $f:\mY\to\mZ$ we write $W'=f(W)$ to refer to a channel $W':\mX\tosq\mZ$, defined as a composition of channel $W$ and function $f$.
That is,
\begin{align}
W'(z|x)=\sum_{y\in f^{-1}(z)}W(y|x),
\label{eq:fw}
\end{align}
where $f^{-1}(z)=\set{y:f(y)=z}$ is the inverse image of $f$.
Channel $W'$ is also called \emph{degraded with respect to} $W$, and it is well-known that $\bfC(W')\leq\bfC(W)$, and if both $\bfI(W)$ and $\bfI(W')$ exist, then $\bfI(W') \leq \bfI(W)$.

Let $\Phi=\log_2\max_{z\in \mZ}\abs{f^{-1}(z)}$ be the maximum log-size of a preimage.
Then, the following proposition holds.

\begin{proposition}[Equation (4.3) in \cite{dobrushin67shannons}]
\label{p:preimage}
For any input distribution $p\in\bD_{\mX}$:
\begin{align}
\underset{\pW{p}{W}(x,y)}{\bfE}\bigg[\brabs{\bi(W,p,x,y) - \bi(W',p,x,f(y))}\bigg]\leq \Phi
\label{eq:absi-i}
\end{align}
\end{proposition}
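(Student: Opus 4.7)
The plan is to reduce $D(x,y) = \bi(W,p,x,y) - \bi(W',p,x,f(y))$ to a single log-ratio that is a pointwise conditional-information density, and then bound its absolute expectation via a fiberwise entropy argument. Substituting the definition of $\bi$ from \eqref{eq:IWpxy} and letting $z = f(y)$, the factor $p\circ W'(*,z)$ appearing in $\bi(W',p,x,z)$ pairs with the corresponding part of $\bi(W,p,x,y)$ to give $D(x,y) = \log_2 (P(y\mid x,z)/Q(y\mid z))$, where $P(y\mid x,z) = W(y\mid x)/W'(z\mid x) = \Pr[Y=y\mid X=x,Z=z]$ and $Q(y\mid z) = p\circ W(*,y)/p\circ W'(*,z) = \Pr[Y=y\mid Z=z]$ are two probability distributions supported on the fiber $f^{-1}(z)$.

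Having recast $D$ as a conditional-information density, the signed half of the bound is immediate: $\bfE[D] = \bfI(W,p) - \bfI(W',p) = \bfH[Y\mid Z] - \bfH[Y\mid X,Z] \le \bfH[Y\mid Z] \le \Phi$, using that conditional on $Z=z$ the variable $Y$ lives in $f^{-1}(z)$, a set of size at most $2^\Phi$. For the absolute value I would split $|D| = D + 2(-D)^+$ and handle the remaining term $\bfE[(-D)^+] = \bfE[(\log_2 Q(Y\mid Z)/P(Y\mid X,Z))^+]$ by means of the key identity $\bfE_{p\circ W(x,y)}[Q(y\mid z)/P(y\mid x,z)] = 1$, which follows from $\sum_x p(x)W(y\mid x) = p\circ W(*,y)$. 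Combined with the elementary inequality $\log_2 t \le (t-1)/\ln 2$, this readily produces a bound of the form $\bfE|D| \le \Phi + O(1)$.

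The main obstacle is tightening this last step so as to obtain exactly $\Phi$ on the right-hand side. Bounding $\bfE[D]$ and $\bfE[(-D)^+]$ in isolation is too lossy; instead, one must keep them together and, for each fiber $f^{-1}(z)$, analyse the sum $\sum_{y\in f^{-1}(z)} P(y\mid x,z)\,|\log_2 P(y\mid x,z) - \log_2 Q(y\mid z)|$ by decomposing it over the subsets $\{P\ge Q\}$ and $\{P<Q\}$. The cancellations between these two pieces after averaging over $x$ (using $\sum_x \Pr[x\mid z]\,P(y\mid x,z) = Q(y\mid z)$) should bundle the whole expression into the fiberwise entropy $\bfH[Y\mid Z=z] \le \log_2|f^{-1}(z)| \le \Phi$. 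This delicate cancellation is the step I would model most closely on the argument in \cite{dobrushin67shannons} for equation~(4.3), and it is where I expect the bulk of the technical work to sit.
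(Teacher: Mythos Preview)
The paper does not prove this proposition; it simply cites equation~(4.3) of \cite{dobrushin67shannons}. So there is no in-paper argument to compare against.

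Your reduction of $D(x,y)=\bi(W,p,x,y)-\bi(W',p,x,f(y))$ to $\log_2 P(y\mid x,z)-\log_2 Q(y\mid z)$ is correct, and the outline you give does produce a bound of the form $\Phi+O(1)$. Concretely: $\bfE[D]=I(X;Y\mid Z)\le H(Y\mid Z)\le\Phi$, and for the negative part a log-sum (Jensen) argument on the set $\{Q>P\}$ gives $\bfE[D^-]\le 1/(e\ln 2)$, hence $\bfE|D|\le\Phi+2/(e\ln 2)$.

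Your worry about sharpening this to exactly $\Phi$ is well placed, but not for the reason you think: the bound \eqref{eq:absi-i} with right-hand side $\Phi$ is \emph{false} as stated. Take $\mX=\mY=\{0,1\}$, $\mZ=\{*\}$, $f\equiv *$ (so $\Phi=1$), let $W$ be the BSC with crossover $\delta$, and $p$ uniform. Then $\bi(W',p,x,*)=0$ and
\[
\bfE_{p\circ W}|D|=(1-\delta)\,\bigl|1+\log_2(1-\delta)\bigr|+\delta\,\bigl|1+\log_2\delta\bigr|,
\]
which at $\delta=0.01$ is $\approx 1.032>1=\Phi$. So no cancellation argument will reach the stated constant; the proposition as written needs an additive $O(1)$ correction (or should be read as the signed inequality $\bfE[D]\le\Phi$, which \emph{is} cleanly true).

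This does not damage the paper. Proposition~\ref{p:invimage} only uses that $\bfE|D|/n\to 0$ when $\Phi_n\in o(n)$, and any bound $\Phi_n+O(1)$ is equally sufficient there. Your argument already delivers everything the paper needs; stop at $\Phi+O(1)$ and do not chase the sharp constant.
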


In what follows, consider a channel sequence $U_n:\mX^n\tosq\mY_n$ and a sequence of functions $f_n:\mY_n\to\mZ_n$.
Denote by $V=f(U)$ a channel sequence $V_n=f_n(U_n):\mX^n\tosq\mZ_n$.
Also, denote by $\Phi_n=\log_2\max_{z\in \mZ_n}\abs{f_n^{-1}(z)}$ the logarithm of maximum preimage size.

\begin{proposition}[Function with small preimage]
\label{p:invimage}
If $\Phi_n\in o(n)$, then $\bfC(U)=\bfC(V)$.
If, furthermore, at least one of $\bfI(U)$, $\bfI(V)$ exist, then they both exist and $\bfI(U)=\bfI(V)$.
Moreover, $\bfS(U)\iff\bfS(V)$.
\end{proposition}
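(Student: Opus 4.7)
The plan is to leverage Proposition~\ref{p:preimage} applied to the pair $(U_n, V_n=f_n(U_n))$, which yields
\begin{align*}
\bfE_{p\circ U_n(x,y)}\left[\left|\bi(U_n,p,x,y)-\bi(V_n,p,x,f_n(y))\right|\right]\leq \Phi_n
\end{align*}
for every $p\in\bD_{\mX^n}$, and to handle i-capacity, c-capacity, and information stability in turn.

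For the i-capacity, applying the triangle inequality to the expectation of $\bi$ gives $|\bfI(U_n,p)-\bfI(V_n,p)|\leq\Phi_n$ for all $p$. Plugging in the optimizers $p_n^U$ and $p_n^V$ and using their suboptimality on the other channel then yields $|\bfI(U_n)-\bfI(V_n)|\leq\Phi_n$. Dividing by $n$ and invoking $\Phi_n/n\to 0$ shows that $\bfI(U_n)/n$ and $\bfI(V_n)/n$ converge or diverge together and agree in the limit.

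For the c-capacity, the degradation $V_n=f_n(U_n)$ immediately gives $\bfC(V)\leq\bfC(U)$: any coding scheme for $V_n$ lifts to one for $U_n$ by composing the decoder with $f_n$. For the reverse inequality I would invoke the Verdú--Han characterization \eqref{eq:cformula}. Given any $\alpha<\alpha'<\bfC(U)$, there exist input distributions $p_n^*$ with $\Pr[\bi(U_n,p_n^*,x,y)/n\leq\alpha']\to 0$. A union bound then gives
\begin{align*}
\Pr\!\left[\bi(V_n,p_n^*,x,f_n(y))/n\leq\alpha\right]
&\leq \Pr\!\left[\bi(U_n,p_n^*,x,y)/n\leq\alpha'\right] \\
&\quad+ \Pr\!\left[\bi(U_n,p_n^*,x,y)-\bi(V_n,p_n^*,x,f_n(y))>(\alpha'-\alpha)n\right],
\end{align*}
and Markov's inequality applied to the base bound controls the second term by $\Phi_n/((\alpha'-\alpha)n)\to 0$. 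Hence $\alpha\leq\bfC(V)$ for every $\alpha<\bfC(U)$, giving $\bfC(U)\leq\bfC(V)$.

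For information stability I would combine the previous two steps with Proposition~\ref{p:is}, which asserts $\bfIS(W)\iff\bfC(W)=\bfI(W)$. If $\bfIS(V)$ holds, then $\bfI(V)$ exists and equals $\bfC(V)$; by the i-capacity step $\bfI(U)$ exists and equals $\bfI(V)$, and by the c-capacity step $\bfC(U)=\bfC(V)$, so $\bfC(U)=\bfI(U)$ and $\bfIS(U)$ follows from Proposition~\ref{p:is}. The reverse implication is symmetric. The most delicate step is the c-capacity argument, since one must choose the "slack" $\alpha'-\alpha$ so that both the first probability vanishes and the Markov tail $\Phi_n/((\alpha'-\alpha)n)$ decays; conveniently, any fixed positive slack suffices because $\Phi_n/n\to 0$.
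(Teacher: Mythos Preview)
Your argument is correct. The i-capacity step and the c-capacity step match the paper's proof almost exactly: the paper also uses Proposition~\ref{p:preimage} to get $|\bfI(U_n,p)-\bfI(V_n,p)|\leq\Phi_n$, and also invokes the Verd\'u--Han formula~\eqref{eq:cformula} together with a Markov-type bound on $\Pr[|\bi(U_n)-\bi(V_n)|>\delta n]$ (the paper phrases the latter as a proof by contradiction, but the mechanics are identical to your direct union-bound argument).

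The one genuine difference is how you handle $\bfIS$. The paper proves $\bfIS(U)\iff\bfIS(V)$ \emph{directly}: from $\bfE[|\bi(U_n)-\bi(V_n)|]\leq\Phi_n$ it deduces via Markov that the two information densities coincide in probability after normalization by $n$, and then combines this with the definition~\eqref{eq:is} by a triangle inequality. You instead route through Proposition~\ref{p:is}: having already shown $\bfC(U)=\bfC(V)$ and $\bfI(U)=\bfI(V)$, you conclude $\bfIS(U)\iff\bfC(U)=\bfI(U)\iff\bfC(V)=\bfI(V)\iff\bfIS(V)$. Your route is shorter and avoids repeating the density-closeness estimate, at the cost of invoking the nontrivial Hu equivalence; the paper's route is more self-contained and also yields, as a byproduct, the explicit statement~\eqref{eq:pr-abs-diff-0} that the normalized densities are close in probability, which is reused later in its c-capacity contradiction argument.
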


Thus, one can apply such function sequences to a channel sequence $W$, changing neither $\bfI(W)$ or $\bfC(W)$, or information stability property.

\begin{proposition}[Semi-bijective function]
\label{p:prb}
Consider channels $U:\mX\tosq\mY$ and $V:\mX\tosq\mZ$, such that $V=f(U)$, $f:\mY\to\mZ$.
Consider partitions $\mY=\mA \sqcup \mB$, $\mZ=\mA' \sqcup \mB'$, such that $f(\mA)=\mA'$ and $f$ is a bijection between $\mA$ and $\mA'$.
Denote by $\beta(x)=\Pr_{U(y|x)}\brsq{y\in\mB}$ the probability that the output of $U$ belongs to $\mB$ given input $x\in\mX$.
If $\forall x\in\mX: \beta(x)\leq\obeta$, then for any $p\in\bD_\mX$,
\begin{align}
\bfI(U,p)-\bfI(V,p)\leq \obeta \log_2|\mX|+\frac{1}{e\ln 2}.
\label{eq:uvbeta}
\end{align}
\end{proposition}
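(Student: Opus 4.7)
The plan is to recast the difference as a conditional mutual information and then exploit the bijectivity of $f$ on $\mA$ through an indicator decomposition. Let $(X,Y)$ have joint distribution $p\circ U$ and set $Z=f(Y)$, so that $(X,Z)\sim p\circ V$. Because $Z$ is a deterministic function of $Y$, the chain rule identity $I(X;Y,Z)=I(X;Y)$ gives
\[
\bfI(U,p)-\bfI(V,p)\;=\;I(X;Y)-I(X;Z)\;=\;I(X;Y\mid Z),
\]
so the whole task reduces to upper bounding this one conditional mutual information by $\obeta\log_2|\mX|+\tfrac{1}{e\ln 2}$.

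I would then introduce the indicator of the ``good'' event $A=\bone[Y\in\mA]$, which is a function of $Y$ with $\Pr[A=0]=\bfE_p[\beta(X)]\le\obeta$. The chain rule yields
\[
I(X;Y\mid Z)\;=\;I(X;A\mid Z)+I(X;Y\mid A,Z).
\]
The second summand is controlled by the bijectivity assumption: conditional on $A=1$ we have $Y\in\mA$ and $Z\in\mA'$, and since $f\vert_{\mA}$ is a bijection onto $\mA'$, $Y=(f\vert_{\mA})^{-1}(Z)$ is determined by $Z$, so $I(X;Y\mid A{=}1,Z)=0$. Conditional on $A=0$ I use only the trivial bound $I(X;Y\mid A{=}0,Z)\le H(X\mid A{=}0,Z)\le\log_2|\mX|$. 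Combining, $I(X;Y\mid A,Z)\le\Pr[A=0]\log_2|\mX|\le\obeta\log_2|\mX|$, producing the first summand of the claim.

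For the remaining term $I(X;A\mid Z)\le H(A\mid Z)=\bfE_Z[h_2(q_Z)]$, with $q_Z=\Pr[A{=}0\mid Z]$, I would lean on two elementary estimates: the universal maximum $-x\log_2 x\le \tfrac{1}{e\ln 2}$ on $[0,1]$, attained at $x=1/e$, together with $-(1-x)\log_2(1-x)\le x\log_2 e$, which together give the pointwise inequality $h_2(q)\le -q\log_2 q+q\log_2 e$. Applying Jensen's inequality to the concave function $-x\log_2 x$ gives $\bfE_Z[-q_Z\log_2 q_Z]\le -\Pr[A{=}0]\log_2\Pr[A{=}0]\le \tfrac{1}{e\ln 2}$, and the residual linear term $\Pr[A{=}0]\log_2 e=O(\obeta)$ is merged into the $\obeta\log_2|\mX|$ bookkeeping from the previous paragraph. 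The main obstacle I expect is precisely this last absorption step: one must carry out the accounting between $\log_2 e$ and $\log_2|\mX|$ carefully enough that the bound collapses cleanly to the two stated summands $\obeta\log_2|\mX|+\tfrac{1}{e\ln 2}$ without leaving behind an uncontrolled $\obeta$-linear remainder.
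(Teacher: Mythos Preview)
Your approach is correct and takes a genuinely different route from the paper's. The paper works directly with the sums defining $\bfI(U,p)$ and $\bfI(V,p)$: it splits each into its $\mA$- and $\mB$-part (respectively $\mA'$, $\mB'$), argues that the good parts coincide, and bounds the two bad residuals $b,b'$ separately via the change of measure $\tp(x)=p(x)\beta(x)/\Delta$ together with $-\tfrac{1}{e\ln 2}\le x\log_2 x\le 0$. Your route through the identity $\bfI(U,p)-\bfI(V,p)=I(X;Y\mid Z)$ and the chain rule with the indicator $A=\bone[Y\in\mA]$ is shorter and more conceptual.

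The ``absorption obstacle'' you flag is in fact illusory. Both your argument and the paper's implicitly require $f(\mB)\subseteq\mB'$: the paper uses it when it asserts that the $\mA'$-contribution to $\bfI(V,p)$ equals the $\mA$-contribution $a$ to $\bfI(U,p)$, and again in the line $\beta(x)=\sum_{y\in\mB}U(y|x)=\sum_{z\in\mB'}V(z|x)$. Under that hypothesis one has $Y\in\mA\iff Z\in\mA'$, so your indicator $A$ is already a deterministic function of $Z$; hence $I(X;A\mid Z)=0$ and only the term $I(X;Y\mid A,Z)\le\Pr[A{=}0]\log_2|\mX|\le\obeta\log_2|\mX|$ survives. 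You therefore obtain $\bfI(U,p)-\bfI(V,p)\le\obeta\log_2|\mX|$ with no $\tfrac{1}{e\ln 2}$ term at all, and the Jensen/$h_2$ estimate can be dropped entirely. In short, your method not only recovers but sharpens \eqref{eq:uvbeta}.
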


Obviously, renaming the output symbols does not change any properties of a channel.
Thus, if $V=f(U)$ and the function $f$, applied to the output of the channel $U$, is bijective, then $V$ is virtually the same channel as $U$.
Roughly speaking, Proposition~\ref{p:prb} states that if the output with high probability falls into the set $\mA$, where the function $f$ is bijective, then the change of the i-capacity is small.

\begin{proposition}[Function that does nothing almost surely]
\label{p:pr0}
Consider channel sequences $U_n:\mX^n\tosq\mY_n$ and $V_n:\mX^n\tosq\mZ_n$, such that $V_n=f_n(U_n)$, $f_n:\mY_n\to\mZ_n$.
Consider partitions $\mY_n=\mA_n \sqcup \mB_n$, $\mZ_n=\mA'_n \sqcup \mB'_n$, such that $f_n(\mA_n)=\mA'_n$ and $f_n$ is bijective between $\mA_n$ and $\mA'_n$.
Denote by $\beta_n(x)=\Pr_{U_n(y|x)}\brsq{y\in\mB_n}$, and $\obeta_n=\max_x\beta_n(x)$.
If the output of channel $U_n$ almost always belongs to $\mA_n$, i.e., $\lim_{n\to\infty}\obeta_n=0$, then $\bfC(U)=\bfC(V)$.
If, in addition, one of $\bfI(U)$, $\bfI(V)$ exists, then they both exist and $\bfI(U)=\bfI(V)$.
Moreover, $\bfS(U)\iff\bfS(V)$.
\end{proposition}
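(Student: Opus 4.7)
The plan is to combine the i-capacity bound from Proposition~\ref{p:prb} with a pointwise-coupling observation on the high-probability event $\{y\in\mA_n\}$, and then read off both information stability and coding capacity from that coupling. First, applying Proposition~\ref{p:prb} to $U_n:\mX^n\tosq\mY_n$ (so that the input alphabet there has size $|\mX|^n$) yields, for every $p_n\in\bD_{\mX^n}$,
\begin{align*}
0\leq \bfI(U_n,p_n)-\bfI(V_n,p_n)\leq \obeta_n\,n\log_2|\mX|+\frac{1}{e\ln 2},
\end{align*}
the lower bound being data processing since $V_n=f_n(U_n)$. Dividing by $n$ and optimizing $p_n$ separately on each side gives $|\bfI(U_n)-\bfI(V_n)|/n\to 0$, so if one of $\bfI(U),\bfI(V)$ exists, so does the other and they are equal.

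The key pointwise observation is that, because $f_n$ is bijective between $\mA_n$ and $\mA'_n$ (and the partition structure forces $f_n^{-1}(\mA'_n)=\mA_n$, i.e., no element of $\mB_n$ is identified with one of $\mA_n$), for every $y\in\mA_n$ one has $f_n^{-1}(f_n(y))=\{y\}$. Hence $V_n(f_n(y)|x)=U_n(y|x)$ and $p_n\circ V_n(*,f_n(y))=p_n\circ U_n(*,y)$ for every input distribution $p_n$, which produces the exact identity $\bi(U_n,p_n,x,y)=\bi(V_n,p_n,x,f_n(y))$ on the event $\{y\in\mA_n\}$. Since $\Pr_{p_n\circ U_n}[y\in\mB_n]=\sum_x p_n(x)\beta_n(x)\leq\obeta_n\to 0$, the two mutual-information density random variables coincide with probability tending to one, uniformly in $p_n$.

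The remaining claims follow routinely from this coupling. For $\bfIS(U)\iff\bfIS(V)$: if $p^*_n$ witnesses the information stability of, say, $V$, then since $\bfI(U)=\bfI(V)$, for any $\delta>0$,
\begin{align*}
\Pr\!\brsq{\abs{\tfrac{\bi(U_n,p^*_n,x,y)}{n}-\bfI(U)}>\delta}\leq \Pr\!\brsq{\abs{\tfrac{\bi(V_n,p^*_n,x,f_n(y))}{n}-\bfI(V)}>\delta}+\obeta_n\to 0,
\end{align*}
and symmetrically. For $\bfC(U)=\bfC(V)$, use the Verd\'u--Han characterization \eqref{eq:cformula}: the coupling gives $|\Pr_{p_n\circ U_n}[\bi(U_n)/n\leq\alpha]-\Pr_{p_n\circ V_n}[\bi(V_n)/n\leq\alpha]|\leq\obeta_n$ uniformly in $p_n$, so the inner minimum and outer limit in \eqref{eq:cformula} vanish simultaneously for $U$ and $V$. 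The main obstacle is justifying the pointwise identity in the second step; once one has $f_n^{-1}(\mA'_n)=\mA_n$ (as implicit in the setup of Proposition~\ref{p:prb}), the rest is bookkeeping.
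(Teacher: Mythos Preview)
Your proposal is correct and follows essentially the same route as the paper's own proof: Proposition~\ref{p:prb} for the i-capacity gap, the pointwise identity $\bi(U_n,p_n,x,y)=\bi(V_n,p_n,x,f_n(y))$ on $\{y\in\mA_n\}$ (the paper's \eqref{eq:ineqi}), and then reading off both $\bfIS$ and $\bfC$ from that coupling via \eqref{eq:cformula}. Your two-sided bound $|\Pr[\bi(U_n)/n\le\alpha]-\Pr[\bi(V_n)/n\le\alpha]|\le\obeta_n$ is a slightly cleaner packaging of the paper's one-directional chain of inequalities for $\bfC$, and you are right to flag that the argument relies on $f_n^{-1}(\mA'_n)=\mA_n$, which is implicit in the partition setup (and already used in the proof of Proposition~\ref{p:prb}).
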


If this proposition is combined with Proposition~\ref{p:invimage}, the result is that one can apply any finite number of functions, which either have sufficiently small pre-image or sufficiently small probability of changing the output, without changing neither $\bfI$ nor $\bfC$ of the original channel sequence.
The $\bfS$ property is also preserved upon such transformations.

\section{Markov-IDS Channels and the Main Theorems}
\label{s:mark}

In this Section we introduce step-by-step a specific family of channel sequences, namely, the Markov-IDS channel sequence. 
After the definitions, we formulate the main capacity theorem, which is split into two theorems, which are given at the end of this Section. 
The proofs are placed in the next Section, and they will need the propositions, formulated for an arbitrary channel sequence in Section~\ref{s:funcseq}.

\subsection{The IDS Channel Sequence}

Consider an IDS channel $W:\mX\tosq\omY$ as in \cite{dobrushin67shannons}, such that for each $x\in \mX$ the expected length $\bfN(y)$ of the output sequence $y\in\omY$ is finite:
\begin{align}
\exists A\in\bR: \forall x \in \mX: \bfE_{W(y|x)}\brsq{\bfN(y)}\leq A.
\label{eq:eN}
\end{align}
The \emph{IDS channel} $D_n$ is defined as channel $W^n$, followed by a gluing operator on its output (see definition \eqref{eq:fw}):
\begin{align}
D_n=\bg_n(W^n),
\label{eq:wids}
\end{align}
where $\bg_n:\omY^n\to\omY$ glues together $n$ vectors over $\mY$.
The IDS channel sequence $D^\infty$ is information-stable \cite{dobrushin67shannons}, so $\bfC(D)=\bfI(D)$.

\begin{proposition}
\label{p:finH}
The entropy of the output length of IDS channel (with finite expectation of output length) is finite:
\begin{align}
\exists \oh\in\bR: \forall p\in\bD_\mX, \bfH_{[pW](x,y)}\brsq{\bfN(y)} \leq \oh.
\label{eq:finlenH}
\end{align}
The conditional output entropy for each input is finite as well:
\begin{align}
\exists \oH\in\bR: \forall p\in\bD_\mX, \bfH_{\pW{p}{W}(x,y)}\brsq{W(y|x)} \leq \oH.
\label{eq:finH}
\end{align}
Moreover, the unconditional output entropy is also finite:
\begin{align}
\exists \oH'\in\bR: \forall p\in\bD_\mX, \bfH\brsq{\pW{p}{W}(y)} \leq \oH'.
\label{eq:finuncH}
\end{align}
\end{proposition}
\begin{proof}
Fix some input distribution over $\mX$.
The random variable $\bfN(y)$ is distributed over $\bZ_{\geq 0}$, so its entropy is not higher than that of the max-entropy distribution over $\bZ_{\geq 0}$, which is the geometric distribution with finite entropy \cite{lisman72note}.
Denote the latter by $\oh$ and obtain \eqref{eq:finlenH}.

Fix some $x_0\in\mX$.
Denote by $w_y=W(y|x_0)$.
The conditional entropy $\bfH(W(y|x_0))$ is a series 
\begin{align}
\bfH\brsq{w_y}=\sum_{y\in\omY:w_y>0}-w_y\log_2w_y.
\label{eq:Elog}
\end{align}
Since all the terms of the summation are non-negative, we only need to prove that the sum is finite.
Denote by $P_n=\sum_{y\in\mY^n}w_y$ the probability that the output length is $n$.
From~\eqref{eq:eN}, 
\begin{align}
\sum_{y\in\omY}w_y\cdot\bfN(y)=\sum_{n=0}^\infty n\cdot \sum_{y\in\mY^n} w_y = \sum_{n=0}^\infty n\cdot P_n\leq A.
\label{eq:wyNyleqA}
\end{align}
Expression $\sum_{y\in\mY^n}(-w_y\log_2w_y)$ achieves its maximum (with fixed $P_n$), when $w_y=P_n/|\mY|^n$.
Thus,
\begin{align}
\bfH\brsq{w_y}
\leq -\sum_{n=0}^\infty |\mY|^n\cdot\frac{P_n}{|\mY|^n}\log_2\frac{P_n}{|\mY|^n}
\leq \bfH\brsq{P_n}-A \log_2 |\mY|.
\label{eq:p3proof}
\end{align}
The first term is finite by \eqref{eq:finlenH}, the second term is finite by the definition of $A$ and $|\mY|$. 
Denote by $\oH=\max_x \bfH(W(y|x))$.
For any input distribution $p\in\bD_\mX$, the conditional entropy $\bfH_{\pW{p}{W}(x,y)}\brsq{W(y|x)}=\sum_x p(x)\bfH\brsq{W(y|x)} \leq \oH$.

The unconditional entropy can be bounded using the formula $H(Y)\leq H(X,Y) = H(X)+H(Y|X)$, where $H(Y|X)$ is already upper-bounded by $\oH$, and $H(X)$ is the entropy of the input symbol, which does not exceed $\log_2|\mX|$. 
Letting $\oH'=\oH+\log_2|\mX|$ we obtain \eqref{eq:finuncH}.
\end{proof}

\subsection{The Markov-IDS Channel Sequence}

Consider a discrete-time Markov chain with $s$ states $\mS=\set{1,\dots,s}$. 
Let $G$ be an $s\times s$ matrix which defines the transition probabilities: $G_{\sigma, \tau}\in[0,1]$ is equal to the conditional probability of the next state being $\tau$ if the current state is $\sigma$.

Assume that each state $\sigma\in\mS$ corresponds to a \emph{state channel} $W_1[\sigma]:\mX\tosq\omY$, which is an IDS channel\footnote{If the state channels have different output alphabets $\mY_\sigma$ for each $\sigma$, we can easily generalize this case by letting $\mY=\cup_\sigma \mY_\sigma$.}.

The FSMC channel \cite{gallager68information} works as follows:
\begin{enumerate}
	\item The initial state $\sigma_1$ is given.
	\item The $i$-th symbol $x_i\in\mX$ is transmitted through channel $W_1[\sigma_i]$. 
	The output is $\oy_i\in\omY$.
	\item After the $i$-th symbol is transmitted, the system transits from state $\sigma_i$ to state $\sigma_{i+1}$ with probability $G_{\sigma_i,\sigma_{i+1}}$.	
	\item For $n$ input symbols, one obtains $\oy_1^n\in\omY^n$. 	
\end{enumerate}

Let $W_{*,1}[\sigma_1]=(W_{1,1}[\sigma_1],W_{2,1}[\sigma_1],\dots)$ be an FSMC channel sequence with initial state $\sigma_1$.
Then
\begin{align}
&W_{n,1}[\sigma_1](y_1^n|x_1^n)=
W_1[\sigma_1](y_1|x_1)
\nonumber\\&\times
\sum_{\sigma_2^n\in\mS^{n-1}} \prod_{i=2}^n G_{\sigma_{i-1},\sigma_i}W_1[\sigma_i](y_i|x_i).
\label{eq:Wn1}
\end{align}
Introduce channel $W_n[\sigma]$ which outputs the glued output of $W_{n,1}[\sigma]$:
\begin{align}
W_n[\sigma]=\bg_n(W_{n,1}[\sigma]).
\label{eq:Wn}
\end{align}
Assume that the initial state of the Markov chain is chosen according to distribution $\rho \in\bD_s$.
Then, the corresponding FSMC channel sequence is denoted by $V_{n,1}[\rho]: \mX^n\tosq\omY^n$:
\begin{align}
V_{n,1}[\rho](y|x)=\sum_{\sigma\in\mS}\rho_\sigma \cdot W_{n,1}[\sigma](y|x).
\label{eq:Vn1}
\end{align}
\emph{The Markov-IDS channel sequence} $V_n[\rho]=\bg_n(V_{n,1}[\rho])$ is given by gluing the output of the FSMC channel:
\begin{align}
V_n[\rho](y|x)=\sum_{\sigma\in\mS}\rho_\sigma \cdot W_n[\sigma](y|x). 
\label{eq:Vn}
\end{align}

Denote by $V_{n_1|n_2|...|n_t}[\rho]:\mX^n\tosq \omY^t$, $n=\sum_i n_i$, a channel, which works as a channel $V_{n,1}[\rho]$ combined with the merging operation.
The merging operation concatenates the outputs inside $i$-th output sub-block for each $i$, where the corresponding $i$-th input sub-block has length $n_i$.
Therefore, the receiver knows the output corresponding to each $n_i$ block but does not know the exact output for each input symbol.
Also, we write $\tn_i$ instead of $n_i$ to denote the fact that the output, corresponding to the $i$-th input block, is erased, i.e., substituted by the zero-length vector.

\begin{proposition}[Entropy of a separate output block of Markov-IDS channel is linear]
\label{p:linH}
Consider a channel $\tW=V_{n_1|n_2|n_3}[\rho](y_1^3|x_1^3)$, let $X_i$ be a random variable, corresponding to $x_i$, i.e., the $i$-th input block of $n_i$ symbols, and $Y_i$ be a random variable of the output $y_i$, corresponding to input $x_i$. 
Then, the unconditional entropy of $Y_2$ is at most linear in $n_2$:
\begin{align}
\exists \tH\in\bR: \forall p\in\bD_{\mX^{n}}, \bfH_{\pW{p}{\tW}(x_1^3,y_1^3)}\brsq{\pW{p}{\tW}(y_2)} \leq \tH n_2,
\label{eq:finHmarkov}
\end{align}
where $n=n_1+n_2+n_3$ and $\tH$ does not depend on $n_i$ nor on $\rho$.
Moreover, the output entropy of $V_n[\rho]$ does not exceed $\tH n$.
\end{proposition}
\begin{proof}
By Proposition~\ref{p:finH}, the entropy of the output of each component IDS channel $W[\sigma]$ is upper-bounded by a constant $\oH'_\sigma$, and the entropy of the output \emph{length} of $W[\sigma]$ can be upper-bounded by $\oh_\sigma$.
Let $\oH'=\max_\sigma \oH'_\sigma$, $\oh=\max_\sigma \oh_\sigma$.

Consider telling the following additional information (apart from $y_1^3$) to the receiver: 1) the $n_2$ states the Markov chain was in during each transmitted bit for the middle $n_2$-block; 2) the length $t_i\in\bZ_{\geq 0}$ of output for each input bit, $i=1\dots n_2$.
It is obvious that the amount of the additional information is not greater than $n_2(\log_2s + \oh)$ bits.
The output $y_2$ thus can be split into $n_2$ parts $z_1^{n_2}$, $z_i\in \omY$, which correspond to the output of a component channel for each input bit.
Now, for each input bit in the $n_2$-block, we know the output $z_i$ and the state channel $W[\sigma]$, which transmitted the symbol $(x_2)_i$.
Thus, the (unconditional) entropy of $z_i$ can be upper-bounded by $\oH'$. 
The entropy of $y_2$ can be upper-bounded by the sum of entropies of $z_i$ plus the amount of the additional information, which we have told the receiver:
\begin{align}
\bfH_{\pW{p}{\tW}(x_1^3,y_1^3)}\brsq{\pW{p}{\tW}(y_2)} \leq n_2(\oH'+\oh+\log_2 s).
\label{eq:finHmarkov1}
\end{align}
Letting $\tH=\oH'+\oh+\log_2 s$, we obtain \eqref{eq:finHmarkov}. 
Recall that $\tH$ only depends on the set of state channels $W[\sigma]$, $\sigma\in\mS$.
Letting $n_1=n_3=0$, we can obtain the additional statement of this proposition.
\end{proof}

\subsection{Markov Chain State Distribution Convergence}
Throughout the section, we use notation of the channel sequences $W_{n,1}[\sigma]$, $W_n[\sigma]$, $V_{n,1}[\rho]$ and $V_n[\rho]$, $n\in\bN$, as defined in \eqref{eq:Wn1}, \eqref{eq:Wn}, \eqref{eq:Vn1} and \eqref{eq:Vn}.

It is well-known (Th. 8.9 in~\cite{billingsley12probability}), that if a Markov chain with a finite state space is aperiodic and irreducible, then there exist constants $B,B'\geq 0$, $0\leq P<1$ and $C>0$, such that
\begin{align}
\forall n\in\bN,\forall i,j\in\mS: |G^n_{i, j}-\pi_j|\leq BP^n=B'e^{-Cn}\;\defeq\;\delta_n,
\label{eq:Gn-pi}
\end{align}
where $G^n_{i,j}=(G^n)_{i,j}$, and $\pi\!\in\!\bD_s$ is the stationary distribution of the Markov chain.
Moreover, for any initial distribution $\rho$,
\begin{align}
\brabs{(\rho G^n)_j-\pi_j}
\leq \sum_i\rho_i\brabs{G^n_{i,j}-\pi_j}
\stackrel{\eqref{eq:Gn-pi}}{\leq} \delta_n.
\label{eq:rhoGn-pi}
\end{align}
Introduce sequence $\ve_n=\delta_n/\min_i \pi_i$.
Then, the small additive deviation (with $n\to\infty$) can be replaced with a small multiplicative deviation as $\pi_i+\delta_n\leq \pi_i(1+\ve_n)$. 
Combining this with~\eqref{eq:Gn-pi} and \eqref{eq:rhoGn-pi},
\begin{align}
\forall \rho\in\bD_s, \forall j:~&\pi_j(1-\ve_n)\leq (\rho G^n)_j \leq \pi_j(1+\ve_n),
\label{eq:rhogn-as}\\
&\ve_n\leq De^{-Cn}, C=\const,D=\const,
\label{eq:ve-as}
\end{align}
which also implies 
\begin{align}
\pi_j(1-\ve_n)\leq G^n_{i, j} \leq \pi_j(1+\ve_n).
\label{eq:gijdounded}
\end{align}
\subsection{Main Theorems}

\begin{theorem}[Existence of i-capacity]
\label{theorem1}
If a Markov chain $G$ of $s$ states, corresponding to Markov-IDS channel sequence $V$, is aperiodic and irreducible, then for any $\rho\in\bD_s$ the i-capacity $\bfI(V[\rho])$ exists and does not depend on $\rho$.
In other words, for any $\rho \in \bD_s$, $\bfI(V[\rho])$ exists and
\begin{align}
\bfI(V[\rho])=\bfI(V[\pi])=\bfI(V),
\label{eq:t1}
\end{align}
where $\pi$ is the stationary distribution of the Markov chain.
\end{theorem}

\begin{theorem}[Information stability of the Markov-IDS channel sequence]
\label{t:cviv}
Assume that the length of the output of all state channels $W[\sigma]$ is constrained, i.e., $W[\sigma]:\mX\tosq \tmY$, where 
\vspace*{-0.2cm}
\begin{align}
\tmY=\cup_{i=0}^A \mY^i,
\label{eq:tmY}
\end{align}
and $A$ is a constant which does not depend on $\sigma$.

Then, for any $\rho$, the channel sequence $V[\rho]$ is information stable.
In particular, the c-capacity $\bfC(V[\rho])$ achieves the i-capacity $\bfI(V)$, and thus does not depend on $\rho$:
\begin{align}
\bfC(V[\rho])=\bfC(V)=\bfI(V).
\label{eq:vnisis}
\end{align}
\end{theorem}

Note that for Theorem~\ref{theorem1}  we do not need the assumption of finite-length output of the component channels. 
This is achieved by employing Propositions~\ref{p:finH} and \ref{p:linH}.
Theorem~\ref{t:cviv} might also avoid this assumption, but we did not work on the generalization of the proof.
On the other hand, if we had the finite-length assumption in Theorem~\ref{theorem1}, then the proof would be simpler, since we would not need the Propositions~\ref{p:finH} and \ref{p:linH}.

The proofs of these theorems are given in the next section, employing several general results provided in the appendices.

In our development in Appendix~A, we also introduce a unified approach to elucidate the idea behind Dobrushin's methods, which can be seen as \emph{applying functions to channel sequences}. If a function is applied to a channel's output, and this function is not ``doing much'' in a certain sense, then the modified channel has the same i-capacity, c-capacity, and information stability as the original one. This 
means that one can apply any finite number of such functions and still have the same capacities.

In Appendix~B, we provide two results that are also used in proving our theorems: Proposition~\ref{p:peps} provides how perturbation of the sequence of optimal input distributions influences the asymptotic behavior of i-capacity.
Proposition~\ref{p:pvar} gives the order of growth of the mutual information variance.

\section{Proofs of the Main Theorems}
\label{s:proofs}

\subsection{Proof of Theorem \ref{theorem1}}
We first will prove that the limit 
$\bfI(V[\rho])$ exists.
For that, we will use the stronger formulation \cite{bruijn52linear} of Fekete's lemma, which says that if function $u:\bN\to\bR$ is quasi-subadditive, i.e., if for all intergers $m,n$, such that $n \leq m \leq 2n$,
\begin{align}
u(m+n)\leq u(m)+u(n)+f(m+n),
\label{eq:gfl}
\end{align}
where function $f(n)$ is such that the series $\sum_{n=1}^\infty f(n)/n^2$ converges, then the limit $\lim_{n\to\infty}u(n)/n$ exists.

Consider the case of $\rho=\pi$, i.e., the initial state distribution is the stationary distribution of the Markov chain (later, it will become clear that the initial state distribution does not influence the i-capacity).
Then, the following sequence of inequalities holds:
\begin{align}
&\bfI(V_{m+n}[\pi]) \stackrel{1}{\leq}
\bfI(V_{m|n}[\pi]) \stackrel{2}{\leq}
\bfI(V_{m|l|n-l}[\pi]) 
\nonumber\\&
\stackrel{3}{\leq}
\bfI(V_{m|\tl|n-l}[\pi])+l\tH
\stackrel{4}{\leq} 
\bfI(V_{m|\tl|n-l|l}[\pi])+l\tH
\nonumber\\ &
\stackrel{5}{\leq} 
\bfI(V_{m|\tl|n}[\pi])+l(\oh+\tH)
\label{eq:VmnVmln}
\end{align}
where $\oh$ and $\tH$ are constants from Propositions~\ref{p:finH} and \ref{p:linH}.

The transitions $1$ and $2$ correspond to providing the receiver with additional information.

In transition $3$, assume the same input distribution $p\in\bD[\mX^{m+n}]$ for both $V_{m|l|n-l}[\pi]$ and $V_{m|\tl|n-l}[\pi]$.
We establish that the transition $3$ is valid by upper-bounding the difference $\bfI(V_{m|l|n-l}[\pi], p)-\bfI(V_{m|\tl|n-l}[\pi], p)$ for any $p$.
The amount of information that we have lost by erasing the $l$-block is upper-bounded by $l\tH$, where $\tH$ is the constant from Proposition~\ref{p:linH}.

In transition $4$, after a block of $(n-l)$ symbols, we transmit a block of $l$ symbols. This does not decrease mutual information.

In transition $5$, the output of the $(n-l)$-block is merged with the output of the $l$-block.
In order to un-merge the two outputs, one has to tell the receiver the length of the last output $l$-block.
This length is the sum of $l$ individual output lengths, corresponding to each input symbol; the entropy of individual length is upper-bounded by $\oh$ (see Proposition~\ref{p:finH}).
Thus, even if we tell the receiver all individual lengths (which are sufficient to compute the total length), we still reveal not greater than $l\cdot \oh$ bits.

Now, consider channel $V_{m|\tl|n}[\rho]:\mX^{m+l+n}\tosq \omY^2$.
We will need the auxiliary channel $W_{n,1}[\sigma\to\tau]:\mX^n\to\mS\times\omY^n$, where $\tau$ is considered as a part of the channel's output.
The channel can be represented as channel $W_{n,1}[\sigma]$, which additionally outputs the state after transmitting the last symbol:
\begin{align}
W_{n,1}[\sigma_1\toshort\tau](y_1^n|x_1^n)&=
\sum_{\sigma_2^n\in\mS^{n-1}} 
G_{\sigma_n,\tau} \cdot
\prod_{i=1}^{n-1} G_{\sigma_{i},\sigma_{i+1}}
\nonumber\\
&\times
\prod_{i=1}^n W_1[\sigma_i](y_i|x_i).
\label{eq:Wn1st}
\end{align}
Also, define the glued version of this channel as
\begin{align}
W_n[\sigma\toshort\tau](\oy|x_1^n)&=\sum_{y_1^n\in \bg_n^{-1}(\oy)}W_{n,1}[\sigma\toshort\tau](y_1^n|x_1^n).
\label{eq:Wnst}
\end{align}
The channel that does not output the final state and the channel that outputs the final state are related via marginalization over the final state:
\begin{align}
W_n[\sigma](\oy|x_1^n)&=\sum_\tau W_n[\sigma\toshort\tau](\oy|x_1^n),
\label{eq:Wnsigma}
\\
W_{n,1}[\sigma](y_1^n|x_1^n)&=\sum_\tau W_{n,1}[\sigma\toshort\tau](y_1^n|x_1^n).
\label{eq:Wn1sigma}
\end{align}
Now we can define the transition probabilities, corresponding to transmitting an $m$ and an $n$-block, separated by an $l$-block, which is then erased:
{\allowdisplaybreaks
\begin{align}
&V_{m|\tl|n}[\pi](y_1^2|x_1^3) \defeq
\nonumber\\&
\sum_{\substack{z_1\in\bg^{-1}_{m}(y_1)\\z_2\in\bg^{-1}_{n}(y_2)}} 
\sum_{\sigma_1^3\in\mS^3}\!\!
\pi_{\sigma_1}\! W_m[\sigma_1\!\toshort\!\sigma_2]\br{z_1|x_1}
G^l_{\sigma_2,\sigma_3} 
W_n[\sigma_3](z_2|x_3)
\nonumber \\ 
&\leq
\sum_{z_1^2,\sigma_1^3}~
\pi_{\sigma_1} W_m[\sigma_1\toshort\sigma_2]\br{z_1|x_1}
\pi_{\sigma_3} (1+\ve_l)
W_n[\sigma_3](z_2|x_3)
\nonumber \\
&\leq
(1+\ve_l)
\sum_{\sigma_1,z_1}
\pi_{\sigma_1} W_m[\sigma_1]\br{z_1|x_1}
\sum_{\sigma_3,z_2}
\pi_{\sigma_3}W_n[\sigma_3](z_2|x_3)
\nonumber \\&
=(1+\ve_l)V_m[\pi](y_1|x_1)V_n[\pi](y_2|x_3).
\label{eq:Vmln-l}
\end{align}
}

For brevity, we now introduce some shortcuts.
Let $U(y_1^2|x_1^2)=V_{m|\tl|n}[\pi](y_1^2|x_1,x',x_2)$.
Since $V_{m|\tl|n}[\pi](y_1^2|x_1^3)$ does not depend on $x_2$, $x'$ can be arbitrary.
Also, let $U'(y_1^2|x_1^2)=V_{m}[\pi](y_1|x_1)V_{n}[\pi](y_2|x_2)$.
Denote marginalized distribution $\tp(x_1,x_2)=\sum_{x'}p(x_1,x',x_2)$.
\begin{figure*}[b]
\vspace{-11pt}
\begin{align}
\hline\nonumber\\[-11pt]
&\bfI(V_{m|\tl|n}[\pi],p)=\bfI(U,\tp) 
=\sum_{x,y}\pWd{\tp}{U}(x,y) \brsq{
\log_2 U(y|x)
-\log_2 \pWyd{\tp}{U}(y)
}\nonumber \\
&\stackrel{6}{\leq}
\sum_{x,y}\set{(1-\ve_l)\pWd{\tp}{U'}(x,y) \log_2 \brsq{(1+\ve_l) U'(y|x)}
-(1+\ve_l)\pWd{\tp}{U'}(x,y) \log_2 \brsq{(1-\ve_l) \pWyd{\tp}{U'}(y)}}
\nonumber \\&
=(1-\ve_l)\bfE_{\pW{\tp}{U'}(x,y)}\brsq{\log_2 U'(y|x) + \log_2\br{1+\ve_l}}
-(1+\ve_l)\bfE_{\pW{\tp}{U'}(x,y)}
\brsq{\log_2 \pWyd{\tp}{U'}(y) + \log_2\br{1-\ve_l}}
\nonumber \\&
= \bfI(U',\tp)
-\ve_l \bfE_{\pW{\tp}{U'}(x,y)}\brsq{\log_2 U'(y|x)}
-\ve_l \bfE\brsq{\log_2 \pWyd{\tp}{U'}(y)}
+ \Gamma_l
= \bfI(U',\tp)
+ \ve_l \bfH_{\pW{\tp}{U'}(x,y)}\brsq{U'(y|x)}
+ \ve_l \bfH\brsq{\pWyd{\tp}{U'}(y)}
+ \Gamma_l
\nonumber \\&
\stackrel{7}{\leq} \bfI(V_m[\pi]\cdot V_n[\pi],\tp) + 2\ve_l(m+n)\tH+\Gamma_l.
\label{eq:UU'}
\end{align}
\end{figure*}%

Rewriting~\eqref{eq:Vmln-l} in shortcuts, one obtains $U(y|x)\leq (1+\ve_l)U'(y|x)$, and one can upper-bound the mutual information by \eqref{eq:UU'} (at the bottom of this page), where $$\Gamma_l=(1-\ve_l)\log_2(1+\ve_l)-(1+\ve_l)\log_2(1-\ve_l).$$%
The transition in $\stackrel{6}{\leq}$ is done as follows.
Each $\log$ in the first term adds a negative amount, so the multiplicative term before the first logarithm is lower bounded.
The logarithm itself is an increasing function, so its argument is upper-bounded.
Note that the upper bound might turn the value of the logarithm to positive, hence lower bounding of the multiplicative term decreases the value of the summand.
But since the original summand under $\sum$ was non-positive and became positive, it is still upper-bounded.
The second term is taken with a negative sign, so all the bounds are the opposite.

Bringing together \eqref{eq:VmnVmln} and \eqref{eq:UU'}, for $u_n=\bfI(V_n[\pi])$ we have
\begin{align}
u_{m+n}-(u_m+u_n) \leq l\tH+2\ve_l(m+n)\tH+\Gamma_l
\label{eq:umnres}
\end{align}
\begin{figure*}[t]
\vspace{-15pt}
\begin{align}
&\bfI(V_{\tl|n}[\rho],p)=\bfI(U,\tp) 
=
\sum_{x,y}\pWd{\tp}{U}(x,y) \brsq{
\log_2 U(y|x)-\log_2 \pWyd{\tp}{U}(y)
}
\nonumber \\&\geq
\sum_{x,y}\set{
(1+\ve_l)\pWd{\tp}{U'}(x,y) \log_2 \brsq{(1-\ve_l) U'(y|x)}
-(1-\ve_l)\pWd{\tp}{U'}(x,y) \log_2 \brsq{(1+\ve_l) \pWyd{\tp}{U'}(y)}
}
\nonumber \\&=
(1+\ve_l)\bfE_{\pW{\tp}{U'}(x,y)}\brsq{\log_2 U'(y|x) + \log_2\br{1-\ve_l}}
-(1-\ve_l)\bfE_{\pW{\tp}{U'}(x,y)}
\brsq{\log_2 \pWyd{\tp}{U'}(y) + \log_2\br{1+\ve_l}}
\nonumber \\&=
\bfI(U',\tp)
+\ve_l\set{\bfE_{\pW{\tp}{U'}(x,y)}\brsq{\log_2 U'(y|x)}+ \bfE\brsq{\log_2 \pWyd{\tp}{U}'(y)}}
+(1+\ve_l)\log_2\br{1-\ve_l}-(1-\ve_l)\log_2\br{1+\ve_l}
\nonumber \\&
=
\bfI(U',\tp)
- \ve_l \set{\bfH_{\pW{\tp}{U'}(x,y)}\brsq{U'(y|x)}+ \bfH\brsq{\pWyd{\tp}{U'}(y)}}
- \Gamma_l
\geq 
\bfI(U,\tp) - 2\ve_ln\tH-\Gamma_l
=
\bfI(V_n[\pi],\tp) - 2\ve_ln\tH-\Gamma_l.
\label{eq:ii}
\\
\hline \nonumber
\end{align}
\end{figure*}%
Let $N=m+n$, $l=\floor{\sqrt{N}}$ and denote by $f(N)$ the r.h.s. of \eqref{eq:umnres}.
Then, the first summand $l\tH \in O(\sqrt{N})$.
Recall \eqref{eq:ve-as} that $\ve_l$ is $O(e^{-Cl})$ for some constant $C$, thus, the second summand goes to zero, as well as the third one, $\Gamma_l\to 0$.
Thus, $f(N)\in O(\sqrt{N})$, so the series $\sum_{N} f(N)/N^2$ converges.
This implies the existence of $\lim_{N\to\infty}u_N/N=\bfI(V[\pi])$.

We can show that $\bfI(V[\rho])=\bfI(V[\pi])$ by the same trick of skipping $l$ symbols at the beginning\footnotemark.%
\footnotetext{
For transition from $V_n[\rho]$ to $V_n[\pi]$ we have to prove two transitions: $V_n[\rho] \to V_{\tl|n}[\rho]$ and $V_{\tl|n}[\rho] \to V_n[\pi]$.
The transition $V_n[\rho] \to V_{\tl|n}[\rho]$ is done as
\begin{align}
\bfI(V_n[\rho])&\leq \bfI(V_{l|n-l|l}[\rho]) \leq \bfI(V_{\tl|n-l|l}[\rho])+l\tH\leq \bfI(V_{\tl|n}[\rho])+l(\tH\!+\!\oh) \nonumber\\
\bfI(V_n[\rho])&\geq \bfI(V_{l|n-l|l}[\rho])-2l\oh \geq \bfI(V_{\tl|n}[\rho])-2l\oh.
\label{eq:fnI}
\end{align}
The difference is sublinear in $n$ ($l\sim \sqrt{n}$, $\oh$ and $\oH$ are constants), so the i-capacity of corresponding channel sequences is the same.
The transition $V_{\tl|n}[\rho] \to V_n[\pi]$ is done as follows.
Similarly to \eqref{eq:Vmln-l},
\begin{align}
(1-\ve_l)V_n[\pi](y|x_2) \leq V_{\tl|n}[\rho](y|x_1^2)\leq (1+\ve_l)V_n[\pi](y|x_2).
\label{eq:fnV}
\end{align}

Similarly to \eqref{eq:UU'},
\begin{align}
\bfI(V_{\tl|n}[\rho])\leq \bfI(V_n[\pi])+2\ve_l n \tH+\Gamma_l
\label{eq:fnIub}
\end{align}
and, using shortcuts $U(y|x_2)=\sum_{x_1}V_{\tl|n}[\rho](y|x_1^2)$, $U'(y|x)=V_n[\pi](y|x)$, we obtain \eqref{eq:ii}.
Since $2\ve_ln\tH \to 0$ and $\Gamma_l\to 0$ when $n\to\infty$, we can write
\begin{align}
\lim_{n\to\infty}\set{\bfI(V_{\tl|n}[\rho], p) - \bfI(V_{n}[\pi], \tp)}=0,
\label{eq:fnlim}
\end{align}
where $p$ is a distribution over $x_1^2\in \bF^l \times \bF^n$, and $\tp$ is a distribution over $x_2\in\bF^n$, which is the same as $p$, if the latter is marginalized over $x_1$.
However, by the definition of $V_{\tl|n}[\rho]$, the distribution over $x_1$ does not matter, since these input symbols are ignored by the channel.
So, for any $p$ there is $\tp$, and for any $\tp$ there is $p$, which satisfy \eqref{eq:ii}, so if we take maximum over $p$ and $\tp$, the limit is still zero even without dividing by $n$, so $\lim_{n\to\infty}\set{\bfI(V_{\tl|n}[\rho])-\bfI(V_n[\pi])}=0$.
}
To underline the fact that $\bfI(V[\rho])$ does not depend on $\rho$, we will denote the i-capacity by $\bfI(V)$.

\subsection{Proof of Theorem \ref{t:cviv}}

Recall that in this theorem, we consider only Markov-IDS channel sequences, for which the state channels only produce an output of length $\leq A$ for some constant $A$.
Also, throughout this section, we use a shortcut $V_n=V_n[\pi]$.


\textbf{Step 1. Transition $V\to\oV$.}
Consider channel sequence $\oV_{n}=V_{l_n|k_n|l_n|...|k_n}[\pi]:\mX^n\tosq\tmY^{2g_n}$, for which the receiver knows the outputs corresponding to all the input subvectors of lengths $l_n$ and $k_n$, where 
\begin{align}
k_n=\floor{\sqrt{n}},~l_n=\floor{\sqrt[4]{n}},~g_n=\bigg\lfloor\frac{n}{k_n+l_n}\bigg\rfloor
\label{eq:kl}
\end{align}
Note that with $n\to\infty$, the asymptotic of these variables are
\begin{align}
g_n, k_n \in O(\sqrt{n}),~l_n\in O(\sqrt[4]{n}).
\label{eq:allasy}
\end{align}
Below, we will omit index $_n$ and write $l$, $k$, $g$, meaning that they depend on $n$ by \eqref{eq:kl}.

Note that the channel $\oV_n$ gives to the receiver all the information that channel $V_n$ gives, plus the boundaries of $l$- and $k$-blocks.
For simplicity of notation, we assume that the last $r_n=n-g_n(k_n+l_n) \in O(\sqrt{n})$ input bits are ignored\footnote{Technically, the last $r_n$-block can be erased by an additional function, and the log-size of the inverse image of that function would be $\sim \log |\mY|^{Ar_n}\sim Ar_n$, which is $o(n)$. 
By Proposition~\ref{p:invimage}, such a function does not change the $\bfS$ property.}.

Thus, the channel $V_n$ can be represented as a pipeline of the channel $\oV_n$ and the merging function $\bg_{2g}$, which merges $2g$ vectors into one, i.e., $V_n=\bg_{2g}(\oV_n)$ (see \eqref{eq:fw} in the Appendix for the formal definition of a function of a channel/channel sequence).

Now, let us upper-bound the asymptotics of the log-size $\Phi_n$ of the inverse image of $\bg_{2g}$ in order to apply Proposition~\ref{p:invimage} to function sequence $\bg_{2g}$ and to finalize current step.
The total length of output of $\oV_n$ is at most $An$ (since we merged $n$ vectors).
The inverse image of $\bg_{2g}$ contains all possible partitions of vector of length at most $An$ into $2g$ subvectors.
It is equivalent to placing $2g-1$ commas between the $An$ symbols (with multiple commas in the same place allowed). The number of ways to place them can be (loosely) upper-bounded by $(An+1)^{2g-1}$, thus,
\begin{align}
\Phi_n&\leq\log_2\max_{y\in\tmY}\abs{\bg_{2g}^{-1}(y)}\leq \log_2(An+1)^{2g-1} 
\nonumber\\&
\in O(g\log_2 (An)) = O(\sqrt{n}\log n)\subset o(n).
\label{eq:Phinasy}
\end{align}
Thus, function sequence $\bg_{2g}$, applied to channel sequence $\oV_n$, satisfies Proposition~\ref{p:invimage}, and we obtain
\begin{align}
\bfI(V)=\bfI(\oV),~\bfS(V)\iff\bfS(\oV).
\label{eq:V-oV}
\end{align}

\textbf{Step 2. Transition $\oV\to\tV$.}
Now, denote by $\tV$ the sequence of channels $\tV_n=V_{\tl|k|\tl|k|\dots}[\pi]:\mX^n\to\tmY^g$.

The difference between channels $\tV_n$  and $\oV_n$ is that the $l$-blocks are deleted.
Let function $\bff_l$ delete all $l$-blocks.
Similarly to the previous step, we should prove that the logarithm $\Phi_n$ of the maximum inverse image of $\bff_l$ is in $o(n)$.
The function $\bff_l$ deletes $g$ blocks of output, each corresponds to $l$ channel input bits, and each channel input bit can result in maximum of $A$ channel output symbols (see the assumption at the beginning of this section), so the total number of deleted output symbols is $\leq Agl$.
Thus, the logarithm of the number of arguments of $\bff_l$, which result in the same input, can be upper-bounded as
\begin{align}
\Phi_n &=\max_{y\in \tmY^g}\abs{\bff_l^{-1}(y)}\leq \log_2\sum_{i=0}^{Agl}|\mY|^i \leq \log_2|\mY|^{Agl+1}
\nonumber\\
&= (Agl+1)\log_2|\mY|\in O(n^{3/4}) \subset o(n),
\end{align}
which means that the function sequence $\bff_{l}$ (recall that $l=l_n$ is a sequence) satisfies Proposition~\ref{p:invimage}, thus, $\bfI(\oV)=\bfI(\tV)$, $\bfS(\oV)\iff \bfS(\tV)$.
Together with \eqref{eq:V-oV}, we obtain
\begin{align}
\bfI(V)=\bfI(\tV),\; \bfS(V)\iff \bfS(\tV). 
\label{eq:V-tV}
\end{align}

\textbf{Step 3. Information stability of $\tV$.}
Similarly to \eqref{eq:Vmln-l}, one can obtain
\begin{align}
&(1-\ve_{l})^g \leq \frac{\tV_{n}(y|x)}{V^g_{k}(y|\mu(x))}
\leq (1+\ve_{l})^g
\label{eq:tVbounded}
\end{align}
where, for length-$n$ input $x$, function $\mu$ deletes the $l$-blocks and the last incomplete block, i.e. $\mu(x)=(x^{(1)},x^{(2)}, \dots, x^{(g)})$, and $x^{(i)}$ is the $i$-th $k$-block: $x^{(i)}=x_{i(k+l)-k+1}^{i(k+l)}$. 

Observe that the non-zero values of $V_k(y|x)$ satisfy
\begin{align}
&V_k(y|x)\geq \sum_{\substack{z\in\bg^{-1}_k(y)\\\sigma\in\mS^k}} \pi_{\sigma_1}\prod_{j=2}^k G_{\sigma_{j-1},\sigma_j} \prod_{i=1}^k W[\sigma_i](z_i|x_i)
\geq \psi^k,
\end{align}
where $\psi$ is the minimum\footnote{This is the only place where we actually need the assumption of finite-length output for a state channel. It allows us to use the minimum, which is never zero, instead of infimum, which might be zero.} non-zero state channel probability multiplied by the minimum non-zero transition probability of the Markov chain:
\begin{align}
\psi&=\min\br{\set{\pi_{\sigma_1},G_{\sigma_1,\sigma_2}}_{\sigma \in \mS^2}\setminus \set{0}}
\nonumber\\&
\times
\min\br{\set{W[\sigma](y|x)}_{\sigma\in\mS,y\in\tmY,x\in\mX}\setminus\set{0}}.
\end{align}
By Proposition~\ref{p:pvar}, there exists a sequence of distributions $\hp_k$, which asymptotically achieves the capacity of $V$, such that the variance of mutual information is
\begin{align}
\bfV(V_k,\hp_k) \in O(k^2).
\label{eq:varhat}
\end{align}

In what follows, we consider the input distribution $\tp$ for $\tV_n$, which is defined as follows.
All values in $l$-blocks and in the last $r$-block are filled with a pad --- say, with a fixed value $x_0\in\mX$.
Denote the set of such $x$ by $\mX^{(0)}_n$.
Then, the probability of a given $x$ is
\begin{align}
\tp_n(x_1^n)=
\begin{cases}
\prod_{i=1}^g \hp_k(x^{(i)}), & x\in\mX^{(0)}_n \\
0, & \text{otherwise}
\end{cases}
\label{eq:tp}
\end{align}

After defining the input distributions, we can consider the mutual information densities.
The fact that channel probabilities of $\tV_n$ and $V^g_k$ are close can be used to prove that mutual information densities of $\tV_n$ and $V^g_k$ are close too by using \eqref{eq:tVbounded} as
\begin{align}
&\bi(\tV_n,\tp_n,x,y)
=\log_2\frac{\tV_n(y|x)}{\pWyd{\tp_n}{\tV_n}(y)} 
\leq \log_2 \!\frac{(1+\ve_l)^g V^g_{k}(y|\mu(x))}{(1-\ve_l)^g \pWyd{\hp_k^g}{V^g_k}(y)}
\nonumber\\&
=
\bi(V^g_k,\hp_k^g,\mu(x),y)+g\log_2 \frac{1+\ve_l}{1-\ve_l}.
\label{eq:etvnvmq}
\end{align}
Using the same technique, one can obtain a similar lower bound:

\begin{align}
&\bi(\tV_n,\tp_n,x,y)
\geq \bi(V^g_k,\hp_k^g,\mu(x),y)-g\log_2 \frac{1+\ve_l}{1-\ve_l}.
\label{eq:etvnvmqlb}
\end{align}
We can rewrite \eqref{eq:etvnvmq}--\eqref{eq:etvnvmqlb} as
\begin{align}
\abs{\bi(\tV_n,\tp_n,x,y)\!-\!\bi(V^g_k,\hp_k^g,\mu(x),y)}\! \leq\! g\log_2 \frac{1+\ve_l}{1-\ve_l}.
\label{eq:itv-qiv-bounded}
\end{align}
As $\ve_l\to 0$, the logarithm in the r.h.s. of \eqref{eq:itv-qiv-bounded} behaves as $\ve_l$.
By \eqref{eq:ve-as} and \eqref{eq:kl}, $\ve_l\in O(e^{-D\sqrt[4]{n}})$ and $g\in O(\sqrt{n})$, so the limit of the r.h.s. is $0$.
Applying inequality $-|a-b|\leq a-b \leq |a-b|$ to \eqref{eq:itv-qiv-bounded}, we obtain
\begin{align}
\lim_{n\to\infty}
\underset{\pW{\tp_n}{\tV_n}}{\bfE}
\abs{
\bi(\tV_n,\tp_n,x,y)
-
\bi(V^g_k,\hp_k^g,\mu(x),y)
}
=0.
\label{eq:eitv-eq-eiqv}
\end{align}
Note that $\bfi(V^g_k, \hp_k^g, w, y)$ is equal to a sum of $g$ i.i.d. random variables.

Recall \eqref{eq:V-tV}, and write sufficient condition of information stability \eqref{eq:ise} as
{\allowdisplaybreaks
\begin{align}
&\lim_{n\to\infty}\underset{\pWd{\tp_n}{\tV_n}(x,y)}{\bfE}\abs{\frac{\bi(\tV_n,\tp_n,x,y)}{n}-\bfI(V)}
\nonumber\\&
\leq
\underbrace{
\lim_{n\to\infty}
\underset{\pWd{\tp_n}{\tV_n}(x,y)}{\bfE}
\abs{\frac{\bi(\tV_n,\tp_n,x,y)-\bi(V_k^g,\hp^g_k,\mu(x),y)}{n}}
}_{0~\text{by \eqref{eq:eitv-eq-eiqv}}}
\nonumber\\&+
\lim_{n\to\infty}
\underset{\pWd{\tp_n}{\tV_n}(x,y)}{\bfE}
\abs{\frac{\bi(V_k^g,\hp^g_k,\mu(x),y)-g\bfI(V_{k},\hp_k)}{n}}
\nonumber\\&+
\underbrace{\lim_{n\to\infty}
\abs{\frac{g\bfI(V_k,\hp_k)-g\bfI(V_k)}{n}}}_{0\ \text{since $\hp_k$ is capacity achieving}}
+
\underbrace{\lim_{n\to\infty}
\abs{\frac{g\bfI(V_{k})}{n}-\bfI(V)}}_{0\ \text{since $g/n\sim k$ and $\bfI(V_k)/k\to\bfI(V)$}}
\nonumber\\&
\overset{(7)}{\leq}
\lim_{n\to\infty} \underbrace{(1+\ve_l)^g}_{\to 1}\cdot
\underset{\pW{\hp_k^g}{V_k^g}(x,y)}{\bfE}
\abs{\frac{\bi(V_k^g,\hp^g_k,x,y)-g\bfI(V_k,\hp_k)}{n}}
\nonumber\\&
=
\lim_{n\to\infty} 
\frac{1}{k}\cdot \bfE
\abs{\frac{1}{g}\cdot \sum_{i=1}^g\bi(V_k,\hp_k,x_i,y_i)-\bfI(V_k,\hp_k)}
\nonumber\\&
\overset{(8)}{\leq}
\lim_{n\to\infty} 
\frac{1}{k}\cdot \sqrt{\bfE\brsq{
\br{\frac{1}{g}\cdot \sum_{i=1}^g\bi(V_k,\hp_k,x_i,y_i)-\bfI(V_k,\hp_k)}^2}}
\nonumber\\&
\overset{(9)}{=}
\lim_{k\to\infty}\frac{1}{k}\sqrt{\frac{\bfV(V_k,\hp_k)}{g}}
=
\lim_{k\to\infty}\sqrt{\frac{\bfV(V_k,\hp_k)}{k^3}}.
\label{eq:EitV}
\end{align}
}%
Note that during the transition $\overset{(7)}{\leq}$, we change the probability distribution for the expectation from $\pW{\tp_n}{\tV_n}(x,y)$ to $\pW{\hp_k^g}{V_k^g}(x,y)$. 
Since $\pWd{\tp_n}{\tV_n}(x,y)=\tp_n(x)\tV_n(y|x)$ and $\pW{\hp_k^g}{V_k^g}(x,y)=\hp^g_k(x)\tV^g_k(y|x)$,
by \eqref{eq:tVbounded} and \eqref{eq:tp}, this leads to a multiplicative term $(1+\ve_l)^g$.
The inequality $\overset{(8)}{\leq}$ is given by Jensen's inequality. 
The equality $\overset{(9)}{=}$ is true since the variance of an average of $g$ i.i.d. samples of a random variable is $g$ times smaller than the variance of the random variable.
By \eqref{eq:varhat} the last expression is zero.
Thus, the channel sequence $\tV$ satisfies Proposition~\ref{p:ise}, which implies $\bfS(\tV)$, which by \eqref{eq:V-tV} is equivalent to $\bfS(V)$.

The proof of $\bfS(V[\rho])$ for an arbitrary state distribution $\rho\neq \pi$ can be done easily by ignoring $l_n$ symbols at the beginning of transmission.
In other words, we insert a guard interval of size $l_n$.
Note that we already inserted infinitely many such guard intervals while transitioning from $V$ to $\tV$, and even this did not influence the information stability property (see \eqref{eq:V-oV}).

\section{Capacity Bounds for Channels with Markov Deletions}
\label{s:capa}

Having established the existence of the Shannon capacity of synchronization error channels with Markov memory, we analyze an example of such a channel in this section. The exemplary model is depicted in Fig.~\ref{fig:markovmodel}.
This is a Markov deletion channel model with two different states.
When the channel is in the first state, the deletion probability is low ($d$), while when it is in the second state, it is high ($D$).
Such a channel can be used as a model for a bursty deletion channel.

\begin{figure}
\centering
\begin{tikzpicture}[x=0.5cm,y=0.5cm]	
	\tikzstyle{block} = [circle, minimum size = 1cm, draw=black, thick, text=black];\
    \node[block] (s0)   at (0,  0) {$d$};
	\node[block] (s1)   at (4,  0) {$D$};
	\draw[-{Stealth},thick] (s0)--(s1) node[above, midway] {$\alpha$};
    \draw[-{Stealth}] (s1.north west) arc (30:150:1.5) node[above, midway]{$\beta$};
    \draw[-{Stealth},thick] (s0.north) arc (0:270:1) node[above=0.1, midway]{$1-\alpha$};
    \draw[-{Stealth},thick] (s1.north) arc (180:-90:1) node[above=0.1, midway]{$1-\beta$};
\end{tikzpicture}
\caption{Exemplary synchronization error channel with memory: deletion channel with two states.}
\label{fig:markovmodel}
\end{figure}

To obtain an upper bound on the capacity of a channel of this kind, we provide both the transmitter and the receiver with side information, employing the main ideas developed in \cite{fertonani10novel}. Namely, we divide the output into blocks corresponding to $k$ transmitted symbols. For each such block, we provide its boundaries, the initial state, and the state corresponding to the last transmitted symbol (even if the symbol is deleted). With this genie-aided information, we obtain a channel whose capacity is an upper bound on the capacity of the original deletion channel with memory. Note that the channel so-obtained is nothing but a sequence of DMCs, where each component DMC is known to transmitter and receiver. Thus, the capacity of the channel with side information is simply a linear combination of capacities of all $4\cdot (k+1)$ channels, which all are the $k$-bit Markov-IDS channels with fixed initial state, terminal state and the length of transmitted sequence (from $0$ to $k$). The transition probabilities for all conditioned channels can be computed straightforwardly, and the optimal input distribution, as well as the maximum mutual information for each of them, can be obtained by the Blahut-Arimoto algorithm. The weights in the linear combination are the probabilities of observing the specific DMCs, which can easily be computed using the Markov structure of the deletion process.

The average deletion probability of the two-state Markov-IDS channel is given by
\begin{align}
    \delta=\frac{\alpha D + \beta d}{\alpha+\beta}.
\label{eq:dmodel}
\end{align}
In the following, we will compare the capacity upper bounds for the two-state Markov-IDS channel with that of the i.i.d. deletion channel with identical average deletion probability $\delta$.

\begin{figure*}
\begin{minipage}{0.3\linewidth}
\centering
\includegraphics[width=\linewidth]{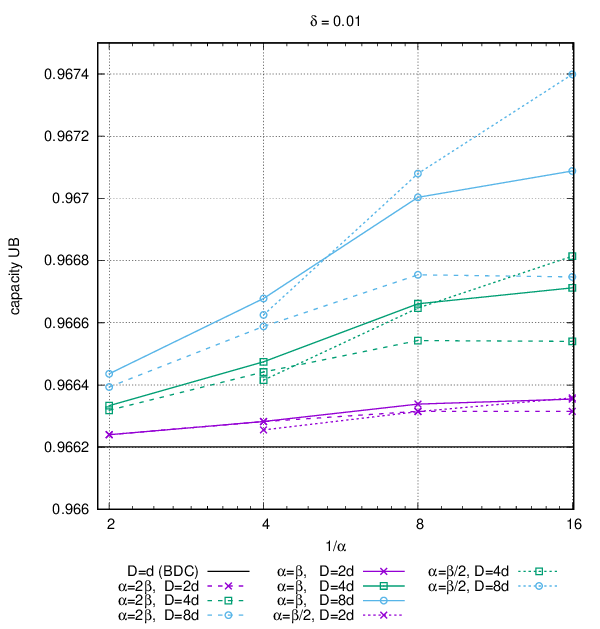}
\caption{The upper bound on capacity for $\delta=0.01$, BDC capacity $\leq 0.945$.}
\label{fig:d0.01}    
\end{minipage}\hfill
\begin{minipage}{0.3\textwidth}
\centering
\includegraphics[width=\linewidth]{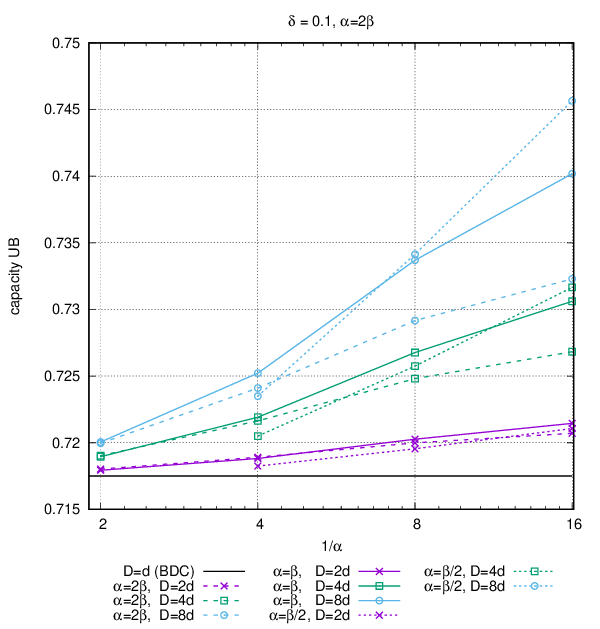}
\caption{The upper bound on capacity for $\delta=0.1$, BDC capacity $\leq 0.689$.}
\label{fig:d0.1}
\end{minipage}\hfill
\begin{minipage}{0.3\textwidth}
\centering
\includegraphics[width=\linewidth]{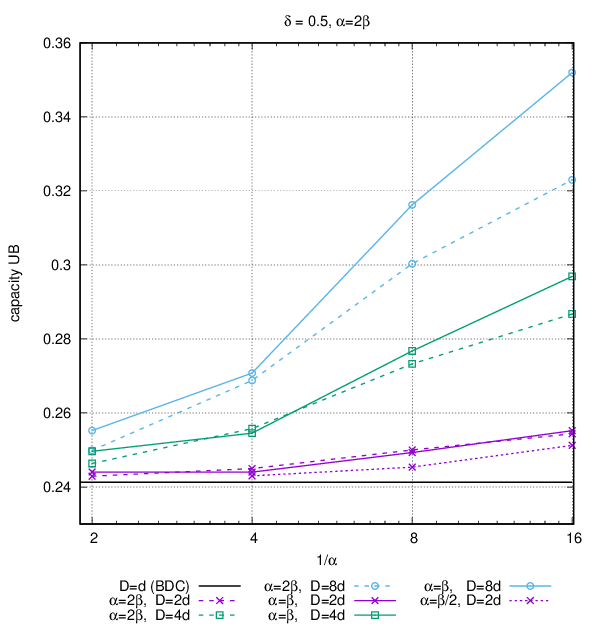}
\caption{The upper bound on capacity for $\delta=0.5$, BDC capacity $\leq 0.213$.}
\label{fig:d0.5}
\end{minipage}
\end{figure*}

In Fig.~\ref{fig:d0.01}--\ref{fig:d0.5}, the upper bounds on the capacity are provided for various Markov-IDS channels.
The parameters $\delta\in\set{0.01,0.1,0.5}$, ratio $D/d\in\set{2,4,8}$, value of $\alpha\in\set{\frac{1}{2},\frac{1}{4},\frac{1}{8},\frac{1}{16}}$ and ratio $\alpha/\beta\in\set{1/2,1,2}$ define the Markov-IDS channel uniquely.
The specific values of $d$ and $D$ can be computed by values of $D/d$, $\alpha$, $\beta$ and $\delta$ via \eqref{eq:dmodel}.
We assume that the initial state of the first block is chosen according to the stationary distribution $(\frac{\beta}{\alpha+\beta},\frac{\alpha}{\alpha+\beta})$.
The capacities were computed for $k=13$.

Each figure corresponds to a specific value of average deletion probability $\delta$. The point types and curve colors correspond to different values of the ratio $D/d$.
Intuitively, if the ratio $D/d$ is large, the channels in the states of the Markov chain are polarized, and the larger portion of deletions occur in the ``bad'' state, which makes the deletion process more bursty and more predictable, which results in the higher bound on capacity.
For comparison, the black curve is provided for the case of $d=D$, which corresponds to a binary i.i.d. deletion channel (with the same block-wise side information).
Although better bounds are available \cite{fertonani10novel} (we provide them in the caption to each figure), for comparison purposes, it is more suitable to compare bounds obtained in the same manner.

The dash types correspond to different values of $\alpha/\beta$.
The larger the value, the more probable the ``good'' state compared to the ``bad'' state. However, the average deletion probability is fixed, so the probabilities of states are balanced by values of $d$ and $D$ --- the more is $\alpha/\beta$, the larger become $d$ and $D$. One can see that the upper bound is not monotonous with respect to $\alpha/\beta$. 

Finally, the horizontal axis corresponds to the value of $1/\alpha$, which reflects the average number of channel uses spent in the ``good'' state. 
Since the relation $\alpha/\beta$ is fixed for each curve, when $\alpha$ is small ($1/\alpha$ is large), the switches between states in both directions are less probable, the sequence of states is more predictable, and the deletions are more probable to occur block-wise at the adjacent positions. Therefore, the upper bound on the capacity is larger for larger values of $1/\alpha$. 

In other words, if the memory is stronger, the deletions are more bursty, and hence, it becomes possible to transmit reliably using higher rates. 

\section{Summary and Conclusions}
\label{s:conc}
We proved that channels with synchronization errors for which a stationary and ergodic Markov chain governs the insertions/deletions, the information capacity of the corresponding channel sequence exists, and it is equal to the c-capacity, generalizing Dobrushin's classical result of Dobrushin on channels with i.i.d. synchronization errors. 
We generalized methods used in \cite{dobrushin59ageneral,dobrushin67shannons} as separate independent propositions to accomplish this goal. 
These propositions state that one can apply a function sequence to a channel sequence, changing neither the c-capacity nor the i-capacity of the latter if the function sequence has an asymptotically small preimage or asymptotically small probability of changing the output.
By applying such function sequences, one can bring the original channel sequence to another channel sequence, for which the capacity theorem or capacity bounds can be more easily derived, and one can be sure that the c-capacity~/~i-capacity is not changed under such transformation. 
The methodology may also be useful for deriving capacity theorems and capacity bounds for other non-trivial channel sequences.
We have also provided specific numerical examples, namely, capacity upper bounds with side information, and shown that having memory in the deletion process results in higher channel capacity for fixed average deletion probability.
Future research directions include further generalizations, for instance, including simultaneously the state's randomness and its dependence on input symbols, as observed in nanopore DNA sequencing.

\begin{appendices}
\section{Functions of Channels and Their Properties}
\label{s:funcapp}

In this appendix, we provide the proofs of propositions from Section~\ref{s:funcseq}.

\begin{proof}[Proof of Proposition~\ref{p:invimage} (a function with small preimage)]
First, note that 
\begin{align}
\underset{\pW{p_n}{V_n}}{\bfE}\brsq{\bi(V_n,p_n,x,y)}=
\underset{\pW{p_n}{U_n}}{\bfE}\brsq{\bi(V_n,p_n,x,f(y))}.    
\end{align}
Using Proposition~\ref{p:preimage} and the fact that $V_n$ is degraded with respect to $U_n$, for any sequence of input distributions $p$:
{\allowdisplaybreaks
\begin{align}
0&
\leq\lim_{n\to\infty}\frac{\bfI(U_n,p_n)-\bfI(V_n,p_n)}{n}
\nonumber\\&
=\lim_{n\to\infty}\underset{\pW{p_n}{U_n}}{\bfE}\brsq{\frac{\bi(U_n,p_n,x,y)-\bi(V_n,p_n,x,f_n(y))}{n}}
\nonumber\\&
\leq \lim_{n\to\infty}\underset{\pW{p_n}{U_n}}{\bfE}\brsq{\frac{\abs{\bi(U_n,p_n,x,y)-\bi(V_n,p_n,x,f_n(y))}}{n}} 
\nonumber\\&
\leq \lim_{n\to\infty}\frac{\Phi_n}{n}=0,
\label{eq:eabsdiff0}
\end{align}
}
which for optimal $p$ for $U$ implies that $\bfI(U)=\bfI(V)$, if either $\bfI(U)$ or $\bfI(V)$ exists.
Equation \eqref{eq:eabsdiff0} implies that for any $p_n$,
\begin{align}
\lim_{n\to\infty}\underset{\pW{p_n}{U_n}}{\bfE}\brsq{\frac{\abs{\bi(U_n,p_n,x,y)\!-\!\bi(V_n,p_n,x,f_n(y))}}{n}}\!=\!0.
\end{align}
Using the same technique as in Proposition~\ref{p:ise}, it can be easily shown that for any $\delta>0$ and $p_n\in\bD_{\mX^n}$,
\begin{align}
\lim_{n\to\infty}\Pr_{\pW{p_n}{U_n}}\brsq{\frac{\abs{\bi(U_n,p_n,x,y)-\bi(V_n,p_n,x,f_n(y))}}{n}\!>\!\delta}=0.
\label{eq:pr-abs-diff-0}
\end{align}
If $\bfS(U)$, then for arbitrary $\delta>0$, both \eqref{eq:is} (but for $U$ instead of $V$) and \eqref{eq:pr-abs-diff-0} can be satisfied for $\delta/2$ simultaneously using some sequence of distributions $p^*$.
The equality $\bfI(U)=\bfI(V)$, proven above, and the triangle inequality imply
\begin{align}
&\lim_{n\to\infty}\Pr_{\pW{p^*_n}{U_n}}\brsq{\abs{\frac{\bi(V_n,p^*_n,x,f_n(y))}{n}-\bfI(V)}>\delta}
\nonumber\\&
\leq\lim_{n\to\infty}\Pr_{\pW{p^*_n}{U_n}}\brsq{\frac{\abs{\bi(U_n,p^*_n,x,y)-\bi(V_n,p^*_n,x,f_n(y))}}{n}>\frac{\delta}{2}}
\nonumber\\&
+\lim_{n\to\infty}\Pr_{\pW{p^*_n}{U_n}}\brsq{\abs{\frac{\bi(U_n,p^*_n,x,y)}{n}-\bfI(U)}>\frac{\delta}{2}}=0.
\end{align}
This implies $\bfS(V)$.
The implication $\bfS(V)\implies\bfS(U)$ can be shown similarly.

To prove that $\bfC(U)=\bfC(V)$, we use the expression \eqref{eq:cformula} for  c-capacity.
Assume that $\bfC(V)<\bfC(U)=\alpha$.
Then,
\begin{align}
&\forall \ve>0,\delta>0, \exists n^*_0:\forall n \geq n^*_0, \exists p^*_n:
\nonumber\\&
\Pr_{\pW{p^*_n}{U_n}(x,y)}\brsq{\frac{\bi(U_n,p^*,x,y)}{n}> \alpha-\ve}>1-\delta. 
\label{eq:CUalpha}\\
&\exists \ve^*>0,\delta^*>0: \forall n_0, \exists n^* \geq n_0: \forall p_{n^*},
\nonumber\\&
\Pr_{\pW{p_{n^*}}{U_{n^*}}(x,y)}\brsq{\frac{\bi(V_{n^*},p,x,f_{n^*}(y))}{n^*}\leq \alpha-\ve^*}\geq\delta^*.
\label{eq:CVnotalpha}
\end{align}
Since \eqref{eq:CUalpha} holds for any $\ve>0$, $\delta>0$, we can let $\ve=\ve^*/2$, $\delta=\delta^*/2$, where $\ve^*$ and $\delta^*$ are from \eqref{eq:CVnotalpha}, and obtain \eqref{eq:epsdeltasim} (see the top of the next page).

\begin{figure*}[t]
\begin{align}
\exists \ve^*>0,\delta^*>0: 
\begin{cases}
\exists n^*_0:\forall n \geq n^*_0, \exists p_n^*:~
\underset{\pW{p_n^*}{U_n}(x,y)}{\Pr}
\brsq{\bi(U_n,p_n^*,x,y)/n> \alpha-\ve^*/2}>1-\delta^*/2
\\
\forall n_0, \exists n^* \geq n_0: \forall p_{n^*},
~\underset{\pW{p_{n^*}}{U_{n^*}}(x,y)}{\Pr}\brsq{\bi(V_{n^*},p_{n^*},x,f_{n^*}(y))/{n^*}\leq \alpha-\ve^*}\geq\delta^*
\end{cases}
\label{eq:epsdeltasim}
\\ \hline \nonumber 
\end{align}    
\end{figure*}

So, the first inequality holds for all $n$ starting from $n^*_0$, and for each $n$ specific $p_n^*$ should be picked up.
The second inequality holds for some $n$ larger than arbitrary $n_0$, for all $p_n$.
Bringing these two conditions together, we can satisfy both inequalities as follows:
\begin{itemize}
\item for any $n_0$, pick up the value of $n^*\geq \max\set{n_0,n^*_0}$, for which the second inequality holds;
\item since $n^*\geq n^*_0$, the first equality still holds for some specific selection of $p^*_{n^*}$. The second inequality will also hold for $p_{n^*}^*$, since it holds for any $p_{n^*}$.	
\end{itemize}
The above procedure results in
\begin{align}
&\exists \ve^*>0,\delta^*>0: \forall n_0, \exists n\geq n_0, \exists p_n:
\nonumber\\&
\begin{cases}
\displaystyle \Pr_{\pW{p_n}{U_n}(x,y)}\brsq{\frac{\bi(U_n,p_n,x,y)}{n}> \alpha-\frac{\ve^*}{2}}>1-\frac{\delta^*}{2}
\\
\displaystyle \Pr_{\pW{p_n}{U_n}(x,y)}\brsq{\frac{\bi(V_{n},p_{n},x,f_{n}(y))}{n}\leq \alpha-\ve^*}\geq\delta^*
\end{cases}
\end{align}
Both probabilities are defined over the same probability space.
The sum of the two probabilities is greater than $1+\delta^*/2$, thus,
\begin{align*}
\Pr_{\pW{p_n}{U_n}(x,y)}\set{\frac{\bi\br{U_n,p_n,x,y}-\bi\br{V_{n},p_{n},x,f_{n}(y)}}{n}\!>\!\frac{\ve^*}{2}}\!>\!\frac{\delta^*}{2},
\end{align*}
which contradicts \eqref{eq:pr-abs-diff-0}.
\end{proof}

\begin{proof}[Proof of Proposition~\ref{p:prb} (a semi-bijective function)]
Let $U':\mX\tosq\mB$ be a channel equivalent to $U$ but with the output restricted to $\mB$:
\begin{align}
U'(y|x)={U(y|x)}/{\beta(x)}.
\end{align}

By definition, $\bfI(U,p)=\bfE_{\pW{p}{U}(x,y)}\brsq{\bi(U,p,x,y)}$.
The expectation can be expressed by two separate summations over $\mA$ and $\mB$:
\begin{align}
\underset{\pW{p}{U}(x,y)}{\bfE}\brsq{\bi(U,p,x,y)}
&=\underbrace{\sum_{y\in \mA,x}p(x) U(y|x)\bi(U,p,x,y)}_{a}
\nonumber\\&
+\underbrace{\sum_{y\in \mB,x}p(x) U(y|x)\bi(U,p,x,y)}_{b}
\end{align}
Let us analyze the term $b$.
First, introduce the modified distribution $\tp(x)=p(x)\beta(x)/\Delta$, where $\Delta=\sum_x p(x)\beta(x)$.
Note that $0\leq \Delta \leq \obeta$.
Substituting $p(x)U(y|x)=\Delta\tp(x)U'(y|x)$ for $y\in\mB$, one obtains
\begin{align*}
&b
=\sum_{y\in\mB,x\in\mX}p(x)U(y|x)\log_2\frac{U(y|x)}{\sum_w p(w)U(y|w)}
\\&
=\Delta\cdot\sum_{y\in\mB,x\in\mX}\tp(x)U'(y|x)
\log_2\frac{\beta(x)U'(y|x)}{\Delta\sum_w \tp(w)U'(y|w)}
\\&
=\Delta\cdot\bfI(U',\tp)+\Delta\sum_{y\in\mB,x\in\mX}\tp(x)U'(y|x)\log_2\frac{\beta(x)}{\Delta}
\\&
= \Delta\br{\bfI(U',\tp)-\log_2 \Delta}+\!\!\!\!\sum_{y\in\mB,x\in\mX}\!\!p(x)U'(y|x)\beta(x)\log_2 \beta(x)
\end{align*}
Using inequality $-\frac{1}{e\ln 2}\leq x\log_2 x \leq 0$ when $0<x\leq 1$, one obtains
\begin{align}
-\Delta\log_2\Delta-\frac{1}{e\ln 2}\leq b \leq \Delta \log_2\frac{|\mX|}{\Delta}
\label{eq:b}
\end{align}
The same can be done for the channel $V$:
\begin{align}
&\bfE_{\pW{p}{V}(x,y)}\brsq{\bi(V,p,x,y)}
=\sum_{y\in \mA'}\sum_{x\in\mX}p(x) V(y|x)\bi(V,p,x,y)
\nonumber\\&
+\underbrace{\sum_{y\in \mB'}\sum_{x\in\mX}p(x) V(y|x)\bi(V,p,x,y)}_{b'}
\nonumber\\&
=\sum_{y\in \mA}\sum_{x\in\mX}p(x) U(y|x)\bi(U,p,x,y)+b'=a+b'
\end{align}
Note that $\beta(x)=\sum_{y\in\mB}U(y|x)=\sum_{z\in\mB'}V(z|x)$, so, using the auxiliary channel $V'(z|x):\mX\tosq\mB'$ defined as $V'(z|x)=V(z|x)/\beta(x)$, one obtains for any $z\in\mB'$, $p(x)V(z|x)=\Delta \tp(x)V'(z|x)$.
Similarly to $b$, the value of $b'$ can be represented as
\begin{align*}
&b'=\sum_{z\in \mB',x\in\mX}p(x) V(z|x)\bi(V,p,x,z)
\\&=\Delta\cdot\!\!\sum_{z\in\mB',x\in\mX}\tp(x)V'(z|x)\log_2\frac{\beta(x)V'(z|x)}{\Delta\sum_w \tp(w)V'(z|w)}
\\&
= \Delta\br{\bfI(V',\tp)-\log_2 \Delta}+\!\!\!\!\!\sum_{z\in\mB',x\in\mX}\!\!\!\!\!p(x)V'(y|x)\beta(x)\log_2 \beta(x)
\end{align*}
so \eqref{eq:b} holds for $b'$ as well.
This implies that for all $p\in\bD_\mX$,
\begin{align}
&\bfI(U,p)-\bfI(V,p) = b-b' \leq 
\Delta \log_2\frac{|\mX|}{\Delta}
\nonumber\\&+\Delta\log_2\Delta+\frac{1}{e\ln 2}
=\Delta \log_2|\mX|+\frac{1}{e\ln 2}.
\end{align}
Since $\Delta\leq \obeta$, \eqref{eq:uvbeta} holds.
\end{proof}

\begin{proof}[Proof of Proposition~\ref{p:pr0} (a function that does nothing almost surely)]
If either $\bfI(U)$ or $\bfI(V)$ exists, then, using Proposition~\ref{p:prb} and distributions $p^*_n$, optimal for $U_n$,
\begin{align*}
0&\leq\bfI(U)-\bfI(V)\leq\lim_{n\to\infty}\frac{\bfI(U_n,p^*_n)-\bfI(V_n,p^*_n)}{n}
\\&
\leq\lim_{n\to\infty}\br{\frac{\obeta_nn\log_2|\mX|}{n}+\frac{1}{ne\ln 2}}=0,
\end{align*}
so $\bfI(U)=\bfI(V)$.

Assume that $\bfS(U)$ is true.
Note that, for any $p_n$,
\begin{align}
&\lim_{n\to\infty}\Probd{\pW{p_n}{U_n}(x,y)}{\bfi(U_n,p_n,x,y)\neq \bfi(V_n,p_n,x,f_n(y))}
\nonumber\\&=    \lim_{n\to\infty}\Probd{\pW{p_n}{U_n}(x,y)}{y\in\mB}
\leq \lim_{n\to\infty}\obeta_n=0.
\label{eq:ineqi}
\end{align}

Since $\bfI(U)=\bfI(V)$, there exists a sequence of distributions $(p^*_n)_{n\in\bN}$, such that for any $\delta>0$:
{\allowdisplaybreaks
\begin{align}
&\lim_{n\to\infty}\Probd{\pW{p^*_n}{V_n}(x,z)}{\abs{\frac{\bfi(V_n,p^*_n,x,z)}{n}-\bfI(V)}>\delta}
\nonumber\\&
=\lim_{n\to\infty}\Probd{\pW{p^*_n}{U_n}(x,y)}{\abs{\frac{\bfi(V_n,p^*_n,x,f_n(y))}{n}-\bfI(U)}>\delta}
\nonumber\\&
\leq \lim_{n\to\infty}\bigg[\Probd{\pW{p^*_n}{U_n}(x,y)}{\bfi(V_n,p^*_n,x,f_n(y))\neq \bfi(U_n,p^*_n,x,y)}
\nonumber\\&
+\Probd{\pW{p^*_n}{U_n}(x,y)}{\abs{\frac{\bfi(U_n,p^*_n,x,y)}{n}-\bfI(U)}>\delta}\bigg],
\end{align}
}%
and by \eqref{eq:ineqi} and information stability of $U$, the limit is $0$.
Thus, $\bfS(U)\implies \bfS(V)$.
The other implication can be shown similarly.

According to \eqref{eq:cformula}, $\bfC(U)=\bfC(V)$ can be proved by analyzing the probability $\Pr\set{\bi(V_n,p_n,x,y)\leq\alpha n}$ for some value of $\alpha$:
\begin{align}
&\Pr_{\pW{p_n}{V_n}(x,y)}\brsq{\bi(V_n,p_n,x,y)\leq \alpha n}
\nonumber\\&=
\Pr_{\pW{p_n}{U_n}(x,y)}\brsq{\bi(V_n,p_n,x,f(y))\leq \alpha n}
\nonumber\\&
\leq \Pr_{\pW{p_n}{U_n}(x,y)}\brsq{\bi(V_n,p_n,x,f(y))\neq\bi(U_n,p_n,x,y)}
\nonumber\\&
+\Pr_{\pW{p_n}{U_n}(x,y)}\brsq{\bi(U_n,p_n,x,y)\leq\alpha n}
\label{eq:pruv}
\end{align}
The first probability goes to zero according to \eqref{eq:ineqi}. Thus, 
\begin{align}
&\lim_{n\to\infty}\min_{p_n}\Pr_{\pW{p_n}{U_n}(x,y)}\brsq{\bi(U_n,p_n,x,y)\leq \alpha n}=0 
\nonumber\\&\implies
\lim_{n\to\infty}\min_{p_n}\Pr_{\pW{p_n}{V_n}(x,y)}\brsq{\bi(V_n,p_n,x,y)\leq \alpha n}=0,
\end{align}
which means that the set of such $\alpha$'s for $U_n$ is a subset of such set for $V_n$, which implies $\bfC(V)=\bfC(U)$.
\end{proof}

\section{Perturbation of an optimal distribution}
\label{s:pert}

In this section, we investigate how a perturbation of the optimal distribution influences the i-capacity.
We also provide the order of growth for the mutual information variance.

\begin{proposition}[Optimal distribution perturbation]
\label{p:peps}
Consider a channel sequence $V_n:\mX^n\tosq\mY_n$ with optimal input distributions $\op_n$.
Denote by $\vp_n \geq 0$ the logarithm of the infimum of non-zero channel probabilities:
\begin{align}
\vp_n = -\log_2\inf\br{\set{V_n(y|x)}_{x,y}\setminus\set{0}}.
\label{eq:phin}
\end{align}
For any sequence $\delta$ with $0\leq \delta_n \leq 1$, consider the sequence of perturbed distributions $\hp^\infty$, which are mixtures of optimal distributions and uniform distributions over $\mX^n$:
\begin{align}
\hp_n(x)=\br{1-\delta_n}\op_n(x)+\frac{\delta_n}{|\mX|^n}.
\label{eq:ppert}
\end{align}
Then, the asymptotic difference of the mutual information capacities is
\begin{align}
\bfI(V_n) - \bfI(V_n,\hp_n) \in O(\delta_n(n+\vp_n)).
\label{eq:pertopt}
\end{align}

In particular\footnote{The intersection is needed for the cases when $\vp_n\to 0$, i.e., for channel sequences, where the minimum non-zero transition probability goes to 1.}, if $\delta_n \in o(1/\vp_n)\cap o(1)$, the distributions $\hp_n$ asymptotically achieve the i-capacity of $V$ (if such exists):
\begin{align}
\lim_{n\to\infty}\frac{\bfI(V_n,\hp_n)}{n}=\bfI(V).
\label{eq:pertach}
\end{align}
Moreover, if we choose $\delta_n \in o(1/\vp_n) \cap o\br{1/n}$, then $\hp_n$ also achieve the i-capacity for each channel $V_n$:
\begin{align}
\lim_{n\to\infty}\brsq{\bfI(V_n)-\bfI(V_n,\hp_n)}=0.
\label{eq:pertstr}
\end{align}
\end{proposition}
\begin{proof}
The equations \eqref{eq:pertach} and \eqref{eq:pertstr} directly follow from \eqref{eq:pertopt}.
Observe that, by definition, $\bfI(V_n,\hp_n)\leq \bfI(V_n)$.
Thus, it is sufficient to show that
\begin{align}
\bfI(V_n,\hp_n) \geq \bfI(V_n)+O(\delta_n(n+\vp_n)).
\label{eq:pertosm}
\end{align}%
\begin{figure*}[t]
\begin{align}
&\bfI(V_n,\hp_n)
=~
\sum_{x}\br{\op_n(x)\br{1-\delta_n}+\frac{\delta_n}{\xi^n}}
\sum_y V_n(y|x) \log_2\frac{V_n(y|x)}{\pWyd{\hp_n}{V_n}(y)}
=
\br{1-\delta_n}\underbrace{\sum_{x}\op_n(x)\sum_y V_n(y|x)\log_2\frac{V_n(y|x)}{\pWyd{\op_n}{V_n}(y)}}_{\bfI(V_n)}
\nonumber \\
+&
\br{1-\delta_n}
\underbrace{
\sum_{x}\op_n(x)\sum_y V_n(y|x)\log_2\frac{\pWyd{\op_n}{V_n}(y)}{\pWyd{\hp_n}{V_n}(y)}
}_{D_\text{KL}\br{\pWy{\op_n}{V_n} || \pWy{\hp_n}{V_n}}\geq 0}
+
\frac{\delta_n}{\xi^n}
\underbrace{
\sum_{x}\sum_{y}V_n(y|x)\underbrace{\log_2 V_n(y|x)}_{\geq-\vp_n}
}_{\geq -\xi^n \vp_n}
+
\underbrace{\frac{\delta_n}{\xi^n}\sum_{x}\sum_{y}V_n(y|x)
\log_2 \frac{1}{\pWyd{\hp_n}{V_n}(y)}}_{\geq 0}
\nonumber \\
\geq &~
\br{1-\delta_n}\bfI(V_n)-\delta_n\vp_n
\geq 
\bfI(V_n)+O(\delta_n(n +\vp_n))
\label{eq:lbpert}
\\
\hline \nonumber  
\end{align}
\end{figure*}%
With $\xi=|\mX|$, it is shown in \eqref{eq:lbpert} at the top of the next page.
\end{proof}

The perturbed distributions $\hp_n$ have a nice property: they are non-zero everywhere, namely, $\hp_n(x)\geq\delta_n/|\mX|^n$.
We use it to obtain the following result.
\begin{proposition}[The order of mutual information variance]
\label{p:pvar}
Consider a channel sequence $V_n:\mX^n\tosq\mY_n$ with finite $|\mX|$.
Assume that the limit $\bfI(V)$ exists and $\displaystyle\lim_{n\to\infty}\vp_n \neq 0$.
Then, there exists a capacity-achieving sequence of input distributions $\hp$, such that
\begin{align}
\bfV(V_n,\hp_n)\in O(\vp_n^2+n^2).
\label{eq:pertvar}
\end{align}
\end{proposition}
\begin{proof}
For any optimal distribution $p_n$, consider a perturbed distribution $\hp_n$, defined in \eqref{eq:ppert}.
Then,
\begin{align}
&\bfV(V_n,\hp_n)
\leq
\underset{\pW{\hp_n}{V_n}(x,y)}{\bfE}\brsq{\bfi^2(V_n,\hp_n,x,y)}
\nonumber \\&
\leq
\sum_{x}\hp_n(x) \sum_y V_n(y|x)\brsq{\underbrace{\log_2^2 V_n(y|x)}_{\leq \vp^2_n}+\log_2^2 \pWyd{\hp_n}{V_n}(y)}
\nonumber \\&
\leq 
\vp^2_n + \log_2^2\frac{\delta_n}{\xi^n 2^{\vp_n}} \in O(\vp_n^2+\log_2^2 \delta_n+n^2).
\label{eq:var}
\end{align}
Let $\delta_n=1/\vp_n \in o(n/\vp_n)$. 
Then, by Proposition~\ref{p:peps}, $\hp$ is capacity-achieving.
Furthermore, $\log_2^2\delta_n \in O(\log^2\vp_n) \subseteq O(\vp_n^2)$, so the term $\log_2^2\delta_n$ in \eqref{eq:var} can be ignored.
\end{proof}

\end{appendices}

\bibliographystyle{IEEEtran}
\bibliography{coding}

@book{billingsley12probability,
  title={Probability and Measure},
  author={Billingsley, P.},
  isbn={9781118341919},
  series={Wiley Series in Probability and Statistics},
  year={2012},
  publisher={Wiley}}

@book{gallager68information,
	title="Information Theory and Reliable Communication",
	author ="Gallager, Robert G.",
	year = 1968,
	publisher = {Wiley},
	isbn = 0471290483}

@article{lisman72note,
author = {Lisman, J. H. C. and Zuylen, M. C. A. van},
title = {Note on the generation of most probable frequency distributions},
journal = {Statistica Neerlandica},
volume = {26},
number = {1},
pages = {19-23},
doi = {https://doi.org/10.1111/j.1467-9574.1972.tb00152.x},
url = {https://onlinelibrary.wiley.com/doi/abs/10.1111/j.1467-9574.1972.tb00152.x},
eprint = {https://onlinelibrary.wiley.com/doi/pdf/10.1111/j.1467-9574.1972.tb00152.x},
year = {1972}}

@article{bruijn52linear,
	title = "Some linear and some quadratic recursion formulas. {II}",
	abstract = "This article also appeared in Indagationes mathematicae.",
	author = "{Bruijn, de}, N.G. and P. Erd{\"o}s",
	year = "1952",	
	volume = "14",
	pages={152--163},
	journal = "Proceedings of the Koninklijke Nederlandse Akademie van Wetenschappen: Series A: Mathematical Sciences",
	issn = "0023-3358"}

@ARTICLE{deng19optimized,
  author={Deng, Li and Wang, Yixin and Noor-A-Rahim, MD. and Guan, Yong Liang and Shi, Zhiping and Gunawan, Erry and Poh, Chueh Loo},
  journal={IEEE Access}, 
  title={Optimized Code Design for Constrained {DNA} Data Storage With Asymmetric Errors}, 
  year={2019},
  volume={7},
  number={},
  pages={84107-84121},
  doi={10.1109/ACCESS.2019.2924827}}

@article{dobrushin59ageneral,
	author = "Dobrushin, R. L.",
	title = "A general formulation of the fundamental theorem of {S}hannon in the theory of information",
	journal = "Uspekhi Mat. Nauk",
	year = 1959,
	volume = "14",
	number  = "6",
	pages   = "3--104"}

@article{dobrushin67shannons,
	author = "Dobrushin, R. L.",
	title = "Shannon's Theorems for Channels with Synchronization Errors",
	journal = "Prob. of Information Trans.",
	year = 1967,
	volume = "3",
	number  = "4",
	pages   = "18--36"}

@ARTICLE{hamoum23channel,
  author={Hamoum, Belaid and Dupraz, Elsa},
  journal={IEEE Access}, 
  title={Channel Model and Decoder With Memory for {DNA} Data Storage With Nanopore Sequencing}, 
  year={2023},
  volume={11},
  number={},
  pages={52075-52087},
  doi={10.1109/ACCESS.2023.3278975}}

@ARTICLE{kanoria13optimal,
  author={Kanoria, Yashodhan and Montanari, Andrea},
  journal={IEEE Transactions on Information Theory}, 
  title={Optimal Coding for the Binary Deletion Channel With Small Deletion Probability}, 
  year={2013},
  volume={59},
  number={10},
  pages={6192-6219},
  doi={10.1109/TIT.2013.2262020}}

@article{li21capacity,
  author={Li, Yonglong and Tan, Vincent Y. F.},
  journal={IEEE Transactions on Information Theory}, 
  title={On the Capacity of Channels With Deletions and States}, 
  year={2021},
  volume={67},
  number={5},
  pages={2663-2679},
  doi={10.1109/TIT.2020.3043117}}

@article{mushkin89capacity,
	author={Mushkin, M. and Bar-David, I.},
	journal={IEEE Transactions on Information Theory}, 
	title={Capacity and coding for the {G}ilbert-{E}lliott channels}, 
	year={1989},
	volume={35},
	number={6},
	pages={1277-1290},
	doi={10.1109/18.45284}}

@article{shannon48mathematical,
	author = {C.E. Shannon},
	title = {A Mathematical Theory of Communication},
	journal = {Bell System Technical Journal},
	volume = 27,
	pages = {379--423, 623-656},
	month = {July, October},
	year = 1948}

@article{stambler70memoryless,
	Author = {Stambler, S. Z.},
	Title = {Memoryless Channels with Synchronization Errors: the General Case},
	FJournal = {Problemy Peredachi Informatsii},
	Journal = {Probl. Peredachi Inf.},
	ISSN = {0555-2923},
	Volume = {6},
	Number = {3},
	Pages = {43--49},
	Year = {1970},
	Language = {Russian}}

@article{verdu94general,
	author={Verdu, S. and Te Sun Han},
	journal={IEEE Trans. on Information Theory}, 
	title={A general formula for channel capacity}, 
	year={1994},
	volume={40},
	number={4},
	pages={1147-1157},
	doi={10.1109/18.335960}}

@article{drinea06lower,
	author={Drinea, E. and Mitzenmacher, M.},
	journal={IEEE Transactions on Information Theory}, 
	title={On lower bounds for the capacity of deletion channels}, 
	year={2006},
	volume={52},
	number={10},
	pages={4648-4657}}

@article{fertonani10novel,
	author={Fertonani, D. and Duman, T. M.},
	journal={IEEE Transactions on Information Theory}, 
	title={Novel bounds on the capacity of binary deletion channels}, 
	year={2010},
	volume={56},
	number={6},
	pages={2753-2765}}

@article{rahmati15upper,
	author={Rahmati, M. and Duman, T. M.},
	journal={IEEE Transactions on Information Theory}, 
	title={Upper bounds on the capacity of deletion channels using channel fragmentation}, 
	year={2015},
	volume={61},
	number={1},
	pages={146-156}}

@article{fertonani11bounds,
	author={Fertonani, D. and Duman, T. M. and Erden, M. F.},
	journal={IEEE Transactions on Communications}, 
	title={Bounds on the capacity of channels with insertions, deletions and substitutions}, 
	year={2011},
	volume={59},
	number={1},
	pages={2-6}}

@article{cheraghchi21overview,
	author={Cheraghchi, M. and Ribeiro, J.},
	journal={IEEE Transactions on Information Theory}, 
	title={An overview of capacity results for synchronization channels}, 
	year={2021},
	volume={67},
	number={6},
	pages={3207-3232}}

@article{mitzenmacher09survey,
	author={Mitzenmacher, M.},
	journal={Probability Surveys}, 
	title={A survey of results for deletion channels and related synchronization channels}, 
	year={2009},
	volume={6},
	pages={1-33}}

@article{hu07bit,
	author={Hu, J. and Duman, T. M. and Kurtas, E. M. and Erden, M. F.},
	journal={IEEE Transactions on Magnetics}, 
	title={Bit Patterned Media with Written-in Errors: Modelling, Detection and Theoretical Limits}, 
	year={2007},
	volume={43},
        no = {8},
	pages={3517-3524}}

@article{shomorony21dna,
  author={Shomorony, Ilan and Heckel, Reinhard},
  journal={IEEE Transactions on Information Theory}, 
  title={{DNA}-Based Storage: Models and Fundamental Limits}, 
  year={2021},
  volume={67},
  number={6},
  pages={3675-3689},  
  doi={10.1109/TIT.2021.3058966}}

@article{mao18models,
  author={Mao, Wei and Diggavi, Suhas N. and Kannan, Sreeram},
  journal={IEEE Transactions on Information Theory}, 
  title={Models and Information-Theoretic Bounds for Nanopore Sequencing}, 
  year={2018},
  volume={64},
  number={4},
  pages={3216-3236},
  doi={10.1109/TIT.2018.2809001}}

@inproceedings{hu62shannon,
  author={Hu, Guo Ding},
  booktitle={Transactions of the Third Prague Conference on Information Theory, Statistical Decision Functions, Random Processes}, 
  title={On {S}hannon Theorem and Its Converse for Sequences of Communication schemes in the Case of Abstract Random Variables}, 
  volume={},
  number={},
  pages={285-332},
  year={1962}}

@inproceedings{kalai10tight,
  author={Kalai, Adam and Mitzenmacher, Michael and Sudan, Madhu},
  booktitle={2010 IEEE International Symposium on Information Theory}, 
  title={Tight asymptotic bounds for the deletion channel with small deletion probabilities}, 
  year={2010},
  volume={},
  number={},
  pages={997-1001},
  keywords={Upper bound;Capacity planning;Entropy},
  doi={10.1109/ISIT.2010.5513746}}

@article{Song2023,
  author={Song, Wentu and Cai, Kui},
  journal={IEEE Transactions on Information Theory}, 
  title={Non-binary two-deletion correcting codes and burst-deletion correcting codes}, 
  year={2023},
  volume={69},
  number={10},
  pages={6470-6484}
}

@article{Sima2021,
  author={Sima, J. and Bruck, J.},
  journal={IEEE Transactions on Information Theory}, 
  title={On optimal k-deletion correcting codes}, 
  year={2021},
  volume={67},
  number={6},
  pages={3360-3375}
}

@article{Song2022,
  author={Song, W. and Polyanskii, N. and Cai, K. and He, X.},
  journal={IEEE Transactions on Information Theory}, 
  title={Systematic codes correcting multiple-deletion and multiple-substitution errors }, 
  year={2022},
  volume={68},
  number={10},
  pages={6402-6416}
}

@article{Wang2011,
  author={Wang, F. and Fertonani, D. and Duman, T. M.},
  journal={IEEE Transactions on Communications}, 
  title={Symbol-level synchronization and {LDPC} code design for insertion/deletion channels}, 
  year={2011},
  volume={59},
  number={5},
  pages={1287-1297}
}

@article{Davey2001,
  author={Davey, M. and Mackay, D.},
  journal={IEEE Transactions on Information Theory}, 
  title={Reliable communication over channels with insertions, deletions, and substitutions}, 
  year={2001},
  volume={47},
  number={2},
  pages={687-698}
}

@inproceedings{Ratzer2000,
  author={Ratzer, E. and Mackay, D.},
  booktitle={2nd International Symposium of Turbo Codes and Related Topics}, 
  title={Codes for channels with insertions, deletions and substitutions}, 
  year={2000},
  volume={},
  number={},
}

@article{Mitzenmacher2006,
  author={Mitzenmacher, M. and Drinea, E.},
  journal={IEEE Transactions on Information Theory}, 
  title={A simple lower bound for the capacity of the deletion channel}, 
  year={2006},
  volume={52},
  number={10},
  pages={4657-4660}
}

@inproceedings{Driena2007,
  author={Drinea, E. and Kirsch, A.},
  booktitle={2007 IEEE International Symposium on Information Theory}, 
  title={Directly lower bounding the information capacity for channels with i.i.d. deletions and duplications}, 
  year={2007},
  volume={},
  number={},
  pages={1731-1735}
}

@inproceedings{Diggavi2007,
  author={Diggavi, S. and Mitzenmacher, M. and Pfister, H.},
  booktitle={2007 IEEE International Symposium on Information Theory}, 
  title={Capacity upper bounds for deletion channels}, 
  year={2007},
  volume={},
  number={}
}

@INPROCEEDINGS{maarouf23achievable,
  author={Maarouf, Issam and Rosnes, Eirik and Amat, Alexandre Graell I},
  booktitle={2023 IEEE Information Theory Workshop (ITW)}, 
  title={Achievable Information Rates and Concatenated Codes for the DNA Nanopore Sequencing Channel}, 
  year={2023},
  volume={},
  number={},
  pages={377-382},
  keywords={Convolutional codes;Sequential analysis;DNA;Encoding;Parity check codes;Decoding;Genetic communication},
  doi={10.1109/ITW55543.2023.10161642}}

@INPROCEEDINGS{mcbain24noisy,
  author={McBain, Brendon and Saunderson, James and Viterbo, Emanuele},
  booktitle={2024 IEEE International Symposium on Information Theory (ISIT)}, 
  title={On Noisy Duplication Channels with Markov Sources}, 
  year={2024},
  volume={},
  number={},
  pages={3438-3443},
  keywords={Sequential analysis;Monte Carlo methods;Codes;Channel capacity;DNA;Channel estimation;Markov processes},
  doi={10.1109/ISIT57864.2024.10619201}}

@misc{con2025channels,
      title={Channels with Input-Correlated Synchronization Errors}, 
      author={Roni Con and João Ribeiro},
      year={2025},
      eprint={2504.14087},
      archivePrefix={arXiv},
      primaryClass={cs.IT},
      url={https://arxiv.org/abs/2504.14087}}

@article{ahlswede71channels, 
title={Channels without synchronization}, volume={3}, DOI={10.2307/1426177}, number={2}, journal={Advances in Applied Probability}, author={Ahlswede, R. and Wolfowitz, J.}, year={1971}, pages={383–403}}

@inproceedings{Morozov2024ISIT,
  author={Morozov, R. and Duman, T. M.},
  booktitle={2024 IEEE International Symposium on Information Theory}, 
  title={On the Capacity of Channels with Markov Insertions, Deletions and Substitutions}, 
  year={2024},
  volume={},
  number={}
}

\end{document}